\documentclass[aps,pra,twocolumn,superscriptaddress,longbibliography]{revtex4-1}

\usepackage{tikz}
\usepackage{algorithm}
\usepackage{booktabs} 
\usepackage{algorithmic}
\usepackage{amsmath, amssymb}
\usepackage{quantikz}
\usepackage{comment}
\usetikzlibrary{3d,arrows.meta,decorations.pathreplacing}
\usetikzlibrary{angles,quotes}
\usepackage{pgfplots}
\pgfplotsset{compat=1.18}
\usepackage{natbib}
\usepackage[breaklinks]{hyperref}
\usepackage{amssymb}
\usepackage{amsmath,amsthm,bm}
\usepackage{amsfonts,dsfont}
\usepackage{graphicx}
\usepackage{lipsum}
\usepackage{subcaption}
\usepackage{mathtools}
\usepackage{tikz}
\usepackage{cleveref}
\usetikzlibrary{angles, quotes}
\usetikzlibrary{positioning}
\usepackage[normalem]{ulem}
\usepackage{color}
\usepackage{xcolor}
\newcommand{\eps}{\varepsilon}

\newcommand{\Tr}{\mbox{Tr}}

\newtheorem{thm}{Theorem}
\newtheorem{lema}[thm]{Lemma}

\makeatletter
\long\def\@makecaption#1#2{%
  \par
  \vskip\abovecaptionskip
  \begingroup
    \small\rmfamily
    \samepage
    \flushing
    \let\footnote\@footnotemark@gobble
    \@make@capt@title{#1}{#2}\par
  \endgroup
  \vskip\belowcaptionskip
}
\makeatother

\begin{document}

\title{Augmenting Imaginary-Time Evolution with Local Geometric Information}

\author{Carlos L. Benavides-Riveros}
\email{carlos.benavides@iqm.tech}
\author{Prachi Sharma}
\author{Fedor  Šimkovic IV}
\email{fedor.simkovic@iqm.tech}
\affiliation{IQM Quantum Computers, Georg-Brauchle-Ring 23-25, 80992 Munich, Germany}

\date{\today}
%TC:ignore
\begin{abstract}
Imaginary-time evolution (ITE) underpins a broad family of algorithms for ground-state preparation in quantum simulation and quantum many-body physics. In these methods, convergence is governed by the energy variance of the instantaneous state, causing the flow to approach the ground state only asymptotically. We introduce an augmented imaginary-time evolution (AITE) framework that replaces the standard gradient flow on the energy landscape with a geometrically informed descent along locally optimal directions, which are identified by exploiting the higher-order statistical structure of the instantaneous energy distribution. The resulting flow strictly outperforms standard ITE throughout the entire evolution and exhibits two qualitatively distinct regimes: a superlinear convergence regime, followed by an extinction regime in which the energy error vanishes exactly at a finite imaginary time, in sharp contrast to the asymptotic exponential decay of ITE. Standard ITE is recovered in the zero-skewness limit of AITE, implying that the acceleration extends naturally across the broader ITE algorithmic family.
\end{abstract}
%TC:endignore
\maketitle

Imaginary-time evolution (ITE) is one of the central ideas behind modern approaches to low-energy many-body phy\-sics. Under the Wick rotation $t \rightarrow -i\tau$, coherent dynamics are replaced by a non-unitary flow that exponentially suppresses excited-state components and drives the system toward its ground state~\cite{PhysRev.96.1124}. This simple projection mechanism underlies a wide range of methods for ground-state preparation, thermal calculations, and spectral estimation in condensed-matter physics, quantum chemistry, and quantum field theory~\cite{Motta2022a, Cao2019,10.1063/1.4916647}, and sits within a broader landscape of relaxation, filtering, and annealing methods that reach well beyond physics and chemistry~\cite{Kirkpatrick1983}.

While employing the same filtering principle, algorithms rooted in ITE take several distinct computational forms. In direct implementations, one approximates the non-unitary propagator $\exp[-\tau \hat H]$, where $\hat{H}$ is the Ha\-mil\-tonian whose ground state is sought, iteratively, so that low-energy structure is progressively revealed by explicit imaginary-time cooling. In many-body settings, this includes product-formula propagation and tensor-network schemes, with block decimation and imaginary-time variational principles as natural matrix-product-state realizations \cite{KOSLOFF1986223, PhysRevLett.93.207204,PhysRevA.109.052430,gtq3-j37b}. Alternatively, stochastic projector methods realize the same spectral filtering statistically, through walker population dynamics or sampled paths whose branching and reweighting drive the dynamics toward low energy, as in diffusion Monte Carlo, auxiliary-field quantum Monte Carlo, and related projector approaches \cite{PhysRevLett.45.566, 10.1063/1.431514,Zhang2018,PhysRevLett.90.136401}. In yet another approach, variational formulations restrict the evolution to a tractable manifold of trial states, replacing exact propagation by projected descent within an ansatz state ma\-ni\-fold~\cite{PRXQuantum.2.010342, 10.21468/SciPostPhys.15.6.229, zima2026,g4ch-5x8m}. Mo\-re\-over, at the classical level, simulated annealing distills the same principle into a cooling schedule for combinatorial optimization \cite{Kirkpatrick1983}, with applications ranging from circuit design to portfolio optimization \cite{CRAMA2003546, PhysRevLett.130.050601, pub.1195650887, 10367741}. 

Since the ITE propagator is non-unitary, it cannot be straightforwardly implemented as a quantum circuit, and this has led to several distinct quantum realizations of the same ITE objective. Some approaches stay as close as possible to the original flow, approximating short imaginary-time steps through implementable unitary updates or related hybrid constructions \cite{Motta2020}. In contrast, others impose the evolution variationally in the form of a parameterized circuit \cite{McArdle2019}. Most recently, two wider viewpoints have become explicit. In one, the non-unitary propagator is treated as a particular instance of spectral filtering, placing ITE within a larger family of energy-selective transformations for low-energy state preparation \cite{D4FD00039K,Cianci2024,Yuan2019theoryofvariational}. In the other, cooling is reformulated in terms of structured flow equations, with double-bracket dynamics providing an alternative route to monotonic energy descent \cite{alghadeer2025}. 

The geometric structure of ITE has also attracted growing attention. Brockett's double-bracket flow established ITE as an isospectral gradient flow on the manifold of density operators \cite{BROCKETT199179}, and the connection between ITE and gradient flow on the Riemannian manifold of quantum states (encoded in the quantum geometric tensor) has provided a natural language for understanding variational implementations and their convergence properties \cite{Stokes2020quantumnatural}.  This geometric perspective has spurred an active line of research, establishing fidelity bounds for ground-state preparation and energy minimization, among other results \cite{Stokes_2023, hartung2025,mcmahon2025equatingquantumimaginarytime}.

\begin{figure*}[th!]
\centering
{\includegraphics[width=0.9\textwidth]{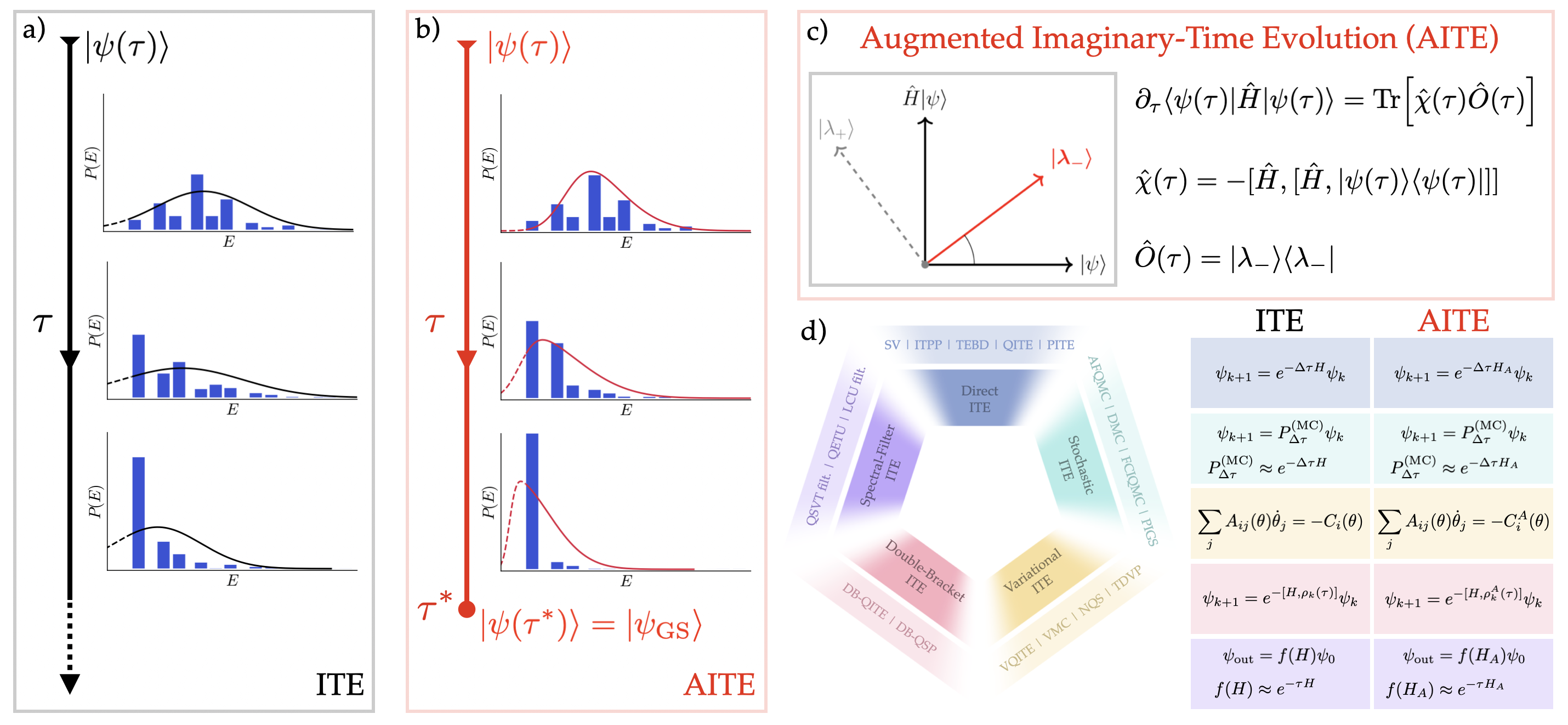}}
\caption{Schematic comparison of standard imaginary-time evolution (ITE) and augmented imaginary-time evolution (AITE). \textbf{a)} In standard ITE, the energy distribution $P(E)$ of the evolving state $|\psi(\tau)\rangle$ shifts monotonically toward lower energies as $\tau$ increases, converging to the ground state asymptotically as $\tau \to \infty$. The convergence rate is governed solely by the energy variance. \textbf{b)} In AITE, the skewness of the instantaneous energy distribution is exploited to identify geometry-informed descent directions, yielding the ground state $|\psi_{\text{GS}}\rangle$ at finite imaginary time $\tau^*$. \textbf{c)} The key equations of AITE: the rate of energy decrease is given by $\mathrm{Tr}[\hat{\chi}(\tau)\hat{O}(\tau)]$, where $\hat{\chi}(\tau)$ is the double-bracket operator. The optimal operator $\hat{O}(\tau) = |\lambda_-\rangle\langle\lambda_-|$ is the projector onto the lowest eigenvector of $\hat{\chi}(\tau)$, projected onto the second-order Krylov subspace spanned by $|\psi\rangle$ and $\bar{H}|\psi\rangle$. \textbf{d)} AITE provides a universal upgrade of the ITE algorithmic family. \textit{Left:} five main branches of ITE algorithms (Direct, Stochastic, Variational, Double-Bracket, and Spectral-Filter ITE) with examples of their representative implementations. \textit{Right:} the corresponding AITE upgrades, obtained by the systematic substitution of $\hat{H} \rightarrow \hat{H}_{\mathrm{A}}(\tau)$ in each branch.}
\label{fig_AITE}
\end{figure*}

Despite this remarkable breadth, the underlying dynamics of ITE have remained essentially unchanged across all these algorithmic branches. The literature has focused almost exclusively on finding efficient ways to implement the ITE flow rather than accelerating it. In almost all cases, this flow drives a monotonic energy decrease at a rate governed by the energy variance of the instantaneous state, and as a result, convergence can be prohibitively slow, depending sensitively on the energy gap and correlation length of the system. This intrinsic limitation has largely gone unaddressed, in some cases leaving a significant gap between the conceptual power of ITE and its practical computational efficiency.

In this work, we close this gap by introducing a systematic improvement of the ITE algorithm. Our approach builds on the double-bracket formulation of ITE~\cite{gluza2025}, generalizing it to a significantly broader and more efficient family of descent flows. Rather than relying solely on local gradient information, our \textit{augmented} imaginary-time evolution (AITE) explicitly incorporates the local geometric structure of the energy landscape, encoded in the higher-order statistical structure of the instantaneous energy distribution, to identify locally superlinear descent directions. This yields accelerated convergence and, quite remarkably, finite-time extinction of the energy error, in sharp contrast to the asymptotic exponential decay of standard ITE, as illustrated in Figs.~\ref{fig_AITE}a) and \ref{fig_AITE}b). Crucially, since AITE subsumes ITE in its ze\-ro-skewness limit, this improvement propagates simultaneously across the broader family of algorithms inspired by ITE.

This paper is organized as follows. We first review the double-bracket formulation of quantum ITE and establish the geometric framework underlying our approach. We then introduce the main ingredients of AITE, derive the optimal descent direction, and discuss its implementation in both unitary and non-unitary forms, together with an analysis of its convergence properties. Next, we present numerical results for weakly and strongly correlated systems from condensed matter and quantum chemistry. We conclude with an outlook on extensions and potential applications of the AITE framework.

\textit{ITE and the Double-Bracket Formalism.—} ITE prepares ground states by ``cooling" an initial trial state $\ket{\psi_0}$ through the continuous application of a non-unitary operator: $\ket{\psi(\tau)} = e^{-\tau \hat{H}} \ket{\psi_0}$ \cite{PhysRev.106.364}. Provided the initial trial state has a nonzero overlap with the ground state $\ket{\psi_{\rm GS}}$ (i.e., $|\langle \psi_{\rm GS} | \psi_0 \rangle| \neq 0$), this evolution converges to the ground state in the limit of infinite imaginary time, $\tau \to \infty$ \cite{PhysRev.84.350,10.1063/1.431514}. For practical purposes, it is convenient to work with the normalized state
\begin{align}
\label{normalizedstate}
\ket{\psi(\tau)} = \frac{e^{-\tau \hat H}\ket{\psi_0}}{||e^{-\tau \hat H}\ket{\psi_0}||}.
\end{align}
This flow drives monotonic energy decrease at a rate set by the energy variance, notably, without critical slowing down \cite{hartung2025}. As mentioned in the introduction, %and illustrated in Fig.~\ref{fig_AITE}d), 
these ima\-gi\-na\-ry-time dynamics underlie a remarkably rich family of algorithms for quantum simulation and optimization.

The state of Eq.~\eqref{normalizedstate} satisfies the norm-preserving differential equation $\partial_\tau \ket{\psi(\tau)} = -\big(\hat{H} - E(\tau)\big)\ket{\psi(\tau)}$, where $E(\tau)$ is the instantaneous energy. Recently, Ref.~\cite{gluza2025} observed that this equation can be written as
\begin{align}
\label{DB}
\partial_\tau \ket{\psi(\tau)} = -[\hat{H}, \hat{\rho}(\tau)]\ket{\psi(\tau)},
\end{align}
where $\hat{\rho}(\tau) = \ket{\psi(\tau)}\bra{\psi(\tau)}$. This reformulation is significant for two reasons. First, the generator $[\hat{H}, \hat{\rho}(\tau)]$ is anti-Hermitian, so the flow admits a natural unitary realization. Second, this unitary structure enables efficient implementation on quantum hardware~\cite{melendez2025, shrikhande2025rapidgroundstateenergy}.

\textit{AITE:---} We generalize the flow of Eq.~\eqref{DB} by replacing $\hat{\rho}(\tau)$ with a general projection operator $\hat{\mathcal{O}}(\tau)$:
\begin{align}
\label{improvedDB}
    \partial_\tau \ket{\psi(\tau)} = -[\hat{H},\, \hat{\mathcal{O}}(\tau)]\ket{\psi(\tau)}.
\end{align}
Since $[\hat{H}, \hat{\mathcal{O}}(\tau)]$ is anti-Hermitian, the flow remains norm-preserving for any such choice. The condition for monotonic energy decrease, $\partial_\tau \bra{\psi(\tau)}\hat{H}\ket{\psi(\tau)} \leq 0$, constrains the admissible choices of $\hat{\mathcal{O}}(\tau)$. A direct calculation yields
\begin{align}
\label{functional}
    \partial_\tau \bra{\psi(\tau)}\hat{H}\ket{\psi(\tau)} = 
    \mathrm{Tr}\!\left[\hat{\chi}(\tau)\,\hat{\mathcal{O}}(\tau)\right],
\end{align}
where $\hat{\chi}(\tau) = -[\hat{H},[\hat{H},\hat{\rho}(\tau)]]$ is the double-bracket operator first introduced by Brockett~\cite{BROCKETT199179, Bach2010}. This operator encodes the local curvature of the energy landscape and coincides with the second imaginary-time derivative of the state under unitary dynamics. Finding a faster energy-decreasing update  thus reduces to minimizing the linear functional in Eq.~\eqref{functional} over the space of admissible operators $\hat{\mathcal{O}}(\tau)$. As shown in the Methods section, the minimizer
$\hat{\rho}^{\mathrm{A}}(\tau) = \operatorname*{argmin}_{\hat{\mathcal{O}}}\, \mathrm{Tr}\![\hat{\chi}(\tau)\,\hat{\mathcal{O}}]$, where the superscript $A$ denotes the operator corresponding to AITE, steers the evolution in Eq.~\eqref{improvedDB} along curvature-informed descent directions that are provably steeper than those of standard ITE. 

\textit{Implementation of AITE:---} We now discuss a practical choice of $\hat{\rho}^{\mathrm{A}}(\tau)$ in order to implement AITE. While optimizing Eq.~\eqref{functional} over the full space of Hermitian operators is generally intractable, a natural and implementable solution emerges by restricting the search to a Krylov subspace---a well-established numerical framework that is, by no means, the only possible choice. Here, our choice is to work within the second-order Krylov 
subspace,
\begin{align}
\mathcal{S}_{\hat{M}}(\tau) = \{\ket{\psi(\tau)}, \ket{v_M(\tau)} \equiv   \bar{M}  \ket{\psi(\tau)}\},
\end{align}
where $\hat M$ is a Hermitian operator and $\bar{M}$ is the centered operator $\hat M - \bra{\psi(\tau)}\hat M \ket{\psi(\tau)}$. Within this subspace, the projection of the double-bracket operator $\hat \chi (\tau)$ captures variations of the energy variance along directions orthogonal to $\ket{\psi(\tau)}$. %See the Supplementary Information for a geometric interpretation of the resulting projection. 

We now define the $n$th energy central moment as $\mu_n(\tau) = 
\bra{\psi(\tau)}\bar{H}^n\ket{\psi(\tau)}$ and restrict the Krylov subspace to the choice $\hat{M} \equiv \hat{H}$. The projected double-bracket operator takes the explicit form:
\begin{align}
\hat{\chi}(\tau)\big|_{\mathcal{S}_H(\tau)} = -2\mu_2(\tau)
\begin{pmatrix}
1 &  \frac{\kappa(\tau)}{2} \\
 \frac{\kappa(\tau)}{2} & -1
\end{pmatrix},
\label{matrix}
\end{align}
where 
\begin{align}
 \kappa(\tau) = \frac{\mu_3(\tau)}{\mu_2^{3/2}(\tau)} .
\end{align}
is the instantaneous Fisher--Pearson skewness coefficient of the energy distribution~\cite{doi:10.1098/rsta.1895.0010}. This matrix admits a  transparent statistical interpretation: the diagonal entries encode the 
energy variance $\mu_2(\tau)$, while the off-diagonal entries are controlled  by $\kappa(\tau)$, which captures the asymmetry of the instantaneous  energy distribution. While skewness is a well-established diagnostic  of non-Gaussianity in quantitative finance and statistical learning~\cite{Groeneveld1984, Cirillo2020, dominguez2025, Bouchaud_Potters_2003,Joanes1998}, it has received comparatively little attention in the quantum  simulation and ITE literature~\cite{PRXQuantum.5.040339, holevo2011probabilistic}. Within $\mathcal{S}_H(\tau)$ subspace, AITE thus naturally steers the descent using statistical information beyond the variance.

The eigenvalues of Eq.~\eqref{matrix} are  
\begin{align}
\lambda_\pm(\tau) = \pm 2\mu_2(\tau)\sqrt{1 + \frac{\kappa^2(\tau)}{4}},
\label{eq:flow_velocity}
\end{align}
with $\lambda_-(\tau) \leq 0 \leq \lambda_+(\tau)$ for all $\tau$. Restricting $\hat{\mathcal{O}}(\tau)$ to rank-one projectors, the functional in Eq.~\eqref{functional} is minimized by the projector onto the eigenvector corresponding to $\lambda_-(\tau)$,
\begin{align}
\hat{\rho}^{\mathrm{A}}(\tau) \equiv \ket{\lambda_-(\tau)}\bra{\lambda_-(\tau)},
\label{eq:rho_G}
\end{align}
where $\ket{\lambda_-(\tau)} = \cos\phi(\tau)\ket{\psi(\tau)} + 
\sin\phi(\tau)\ket{v_H(\tau)}$. The mixing angle $\phi(\tau)$ is a central quantity in our framework:  it measures the deviation of the energy distribution from Gaussianity via the relation $\tan(2\phi(\tau)) = \kappa(\tau)/2$. Notice that $\kappa(\tau) = 0$ implies $\phi(\tau) = 0$, and $\hat{\rho}^{\mathrm{A}}(\tau)$ reduces to $\hat{\rho}(\tau)$, recovering standard ITE. The structure of this optimized descent operator is illustrated schematically in Fig.~\ref{fig_AITE}c). Notably, extending the optimization to higher-order Krylov subspaces provides a systematic and principled route to capturing higher-order energy fluctuations (such as the kurtosis $\mu_4$, and beyond), offering a natural hierarchy of improvements over standard ITE.

The descent rate corresponds to the lowest eigenvalue, and it satisfies $\lambda_-(\tau) \leq -2\mu_2(\tau) = \mathrm{Tr}[\hat{\chi}(\tau)\,\hat{\rho}(\tau)]$ while demonstrating that ITE is provably suboptimal compared to AITE. The inequality is strict whenever $\kappa(\tau) \neq 0$, that is, whenever the instantaneous energy distribution is asymmetric. This reveals a structural limitation of ITE algorithms that, to the best of our knowledge, has not been previously identified: conventional ITE implicitly assumes a symmetric energy distribution (i.e., $\kappa(\tau) = 0$), discarding non-Gaussian features and, most prominently, the skewness. 

\textit{Non-unitary AITE.---}
A natural question is whether the generalized double-bracket flow of Eq.~\eqref{improvedDB} with the minimizer obtained in Eq.~\eqref{eq:rho_G}, admits a formulation as an \emph{augmented} non-uni\-tary ITE. The answer is affirmative: our framework offers considerable flexibility in designing augmented Hamiltonians $\hat{H}_{\mathrm{A}}(\tau)$ such that the 
update rule
\begin{align}
\ket{\psi(\tau+\delta\tau)} 
= \frac{e^{-\delta\tau\hat{H}_{\mathrm{A}}(\tau)}\ket{\psi(\tau)}}
       {\|e^{-\delta\tau\hat{H}_{\mathrm{A}}(\tau)}\ket{\psi(\tau)}\|}
\label{eq:ite_update}
\end{align}
reproduces, to first order in $\delta\tau$, the same energy descent rate as the optimal double-bracket flow $\lambda_-(\tau)$. A concrete example of such an augmented Hamiltonian is:
\begin{align}
\hat{H}_{\mathrm{A}}(\tau) 
= \cos(2\phi)\,\bar{H} 
+ \frac{\sin(2\phi)}{2\sqrt{\mu_2(\tau)}}\,\bar{H}^2.
\label{eq:effectiveH}
\end{align}
It is possible to construct augmented Hamiltonians by downfolding $\bar{H}^2$ and retaining terms up to three-body interactions for electronic systems. In this case, as discussed in the Methods section, the prefactor of $\bar{H}^2$ is mo\-di\-fied accordingly, while the overall structure of the descent flow is preserved.  

\textit{Convergence with power-law extinction.---} We now investigate the late-time convergence structure of AITE. Standard ITE converges to the ground state with an energy error decaying as $\varepsilon_{\rm ITE}(\tau) \sim e^{-2\Delta\tau}$, where $\Delta = E_1 - E_{\rm GS}$ is the spectral gap. We show that AITE belongs to a qualitatively distinct convergence class. In the near-convergence regime $\varepsilon_{\rm AITE}(\tau) \ll \Delta$, the quantum state is dominated by its ground-state component, so that $\mu_2(\tau) \approx \Delta\varepsilon(\tau)$ and $\mu_3(\tau) \approx \Delta^2\varepsilon(\tau)$. Using the AITE flow Eq.~\eqref{eq:flow_velocity} yields $\dot{\varepsilon}_{\rm AITE} = -2\Delta\varepsilon_{\rm AITE}\sqrt{1 + \Delta/4\varepsilon_{\rm AITE}}$, which makes explicit that AITE is strictly faster than ITE at every finite energy error $\varepsilon$. The \textit{superlinear} speedup factor $\sqrt{1 + \Delta/4\varepsilon}$ diverges as $\varepsilon \to 0$, reflecting increasingly aggressive acceleration near the ground state. Standard ITE, $\dot{\varepsilon}_{\rm ITE} = -2\Delta\varepsilon_{\rm ITE}$, is recovered in the large-error limit $\varepsilon \gg \Delta/4$, identifying it as the large-error asymptote of the  augmented dynamics.

\begin{figure}[t]
\centering
{\includegraphics[width=8.5cm]{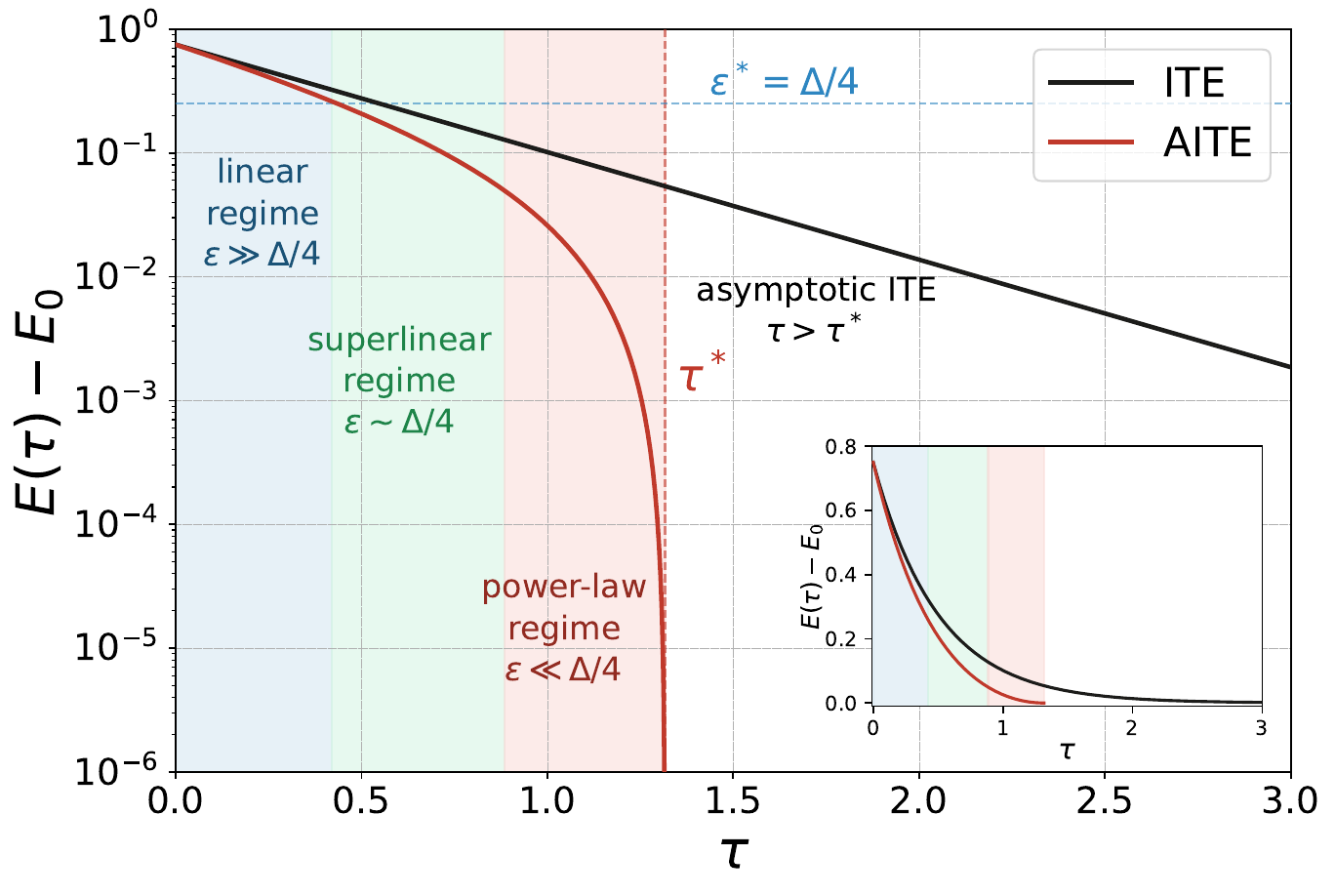}}
\caption{Energy error $E(\tau)-E_{\rm GS}$ as a function of imaginary time $\tau$ for ITE (black) and AITE (red) as predicted in Eq.~\eqref{eq:exact_sol} (with $\Delta=1$ and $\varepsilon_0=0.75$). The main panel shows the error evolution on a logarithmic scale, with shaded regions identifying four dynamical regimes: the linear regime ($\varepsilon\gg\Delta/4$, blue), the superlinear regime ($\varepsilon\sim\Delta/4$, green), the power-law extinction regime ($\varepsilon\ll\Delta/4$, red), and the asymptotic ITE regime. The dashed vertical line marks the finite extinction time $\tau^*$ of AITE. The inset displays the same dynamics on a linear scale, highlighting the finite-time extinction of AITE in contrast to the asymptotic exponential decay of ITE.}
\label{figstwolevel}
\end{figure}

The AITE flow admits the exact closed-form solution
\begin{align}
\varepsilon_{\rm AITE}(\tau) = \frac{\Delta}{4}
\sinh^2\!\left(
\sinh^{-1}\!\left(2\sqrt{\frac{\varepsilon_0}{\Delta}}\right)
- \Delta\tau
\right),
\label{eq:exact_sol}
\end{align}
where $\varepsilon_0 = \varepsilon(0)$ is the initial energy error. Surprisingly, unlike the ITE solution, which decays exponentially and reaches zero only as $\tau \to \infty$, Eq.~\eqref{eq:exact_sol} vanishes exactly (i.e., $\varepsilon_{\rm AITE}(\tau^*) = 0$) at the finite extinction time:
\begin{align}
\tau^* = \frac{1}{\Delta}
\sinh^{-1}\!\left(2\sqrt{\frac{\varepsilon_0}{\Delta}}\right).
\label{eq:tau_star}
\end{align}
Expanding the energy error around $\tau^*$ and using $\sinh(x) \approx x$ for small $x$, Eq.~\eqref{eq:exact_sol} gives
\begin{align}
\label{eq:power_law_extinction}
    \varepsilon_{\rm AITE}(\tau) \approx \frac{\Delta^3}{4}(\tau^* - \tau)^2,
\end{align}
a \emph{power-law extinction} with exponent $2$. This stands in sharp contrast to the exponential tail of ITE and establishes that the two methods belong to provably distinct convergence classes. The finite-time power-law extinction of AITE is reminiscent of finite-time extinction in nonlinear diffusion equations, such as the fast diffusion equation in porous media~\cite{Vasquez2006}, where solutions to $\dot{\varepsilon} = -c\varepsilon^\alpha$ with $\alpha < 1$ are known to reach zero in finite time. The near-ground-state AITE dynamics fall precisely into this class with $\alpha = \tfrac{1}{2}$.

Finally, from Eq.~\eqref{eq:tau_star}, the extinction time $\tau^*$ depends on the initial error only through $\sinh^{-1}(2\sqrt{\varepsilon_0/\Delta})$.
In the large-error regime $\varepsilon_0 \gg \Delta$, using $\sinh^{-1}(x) \approx \ln(2x)$ for $x \gg 1$, this gives $\tau^* \approx ({1}/{2\Delta})\ln({4\varepsilon_0}/{\Delta})$ so $\tau^*$ grows only logarithmically with $\varepsilon_0$. This is to be compared with standard ITE, where the time to reach a fixed target precision $\varepsilon_*$ is $\tau_{\mathrm{ITE}}(\varepsilon_*) = ({1}/{2\Delta})\ln({\varepsilon_0}/{\varepsilon_*})$, which diverges as $\varepsilon_* \to 0$ for any fixed $\varepsilon_0$. In AITE, by contrast, the time to reach \emph{any} target precision is bounded from above by $\tau^*$, which is independent of $\varepsilon_*$ and grows only as $\ln(\varepsilon_0/\Delta)$. Although the analysis above focuses on the near-convergence regime, where the state is dominated by its ground-state component, the result remains valid for arbitrary state populations. The proof of the general case is deferred to the Methods section.

\begin{figure*}[!t]
    \centering
    % First row
    \begin{subfigure}{0.32\textwidth}
        \centering
        \includegraphics[width=\linewidth]{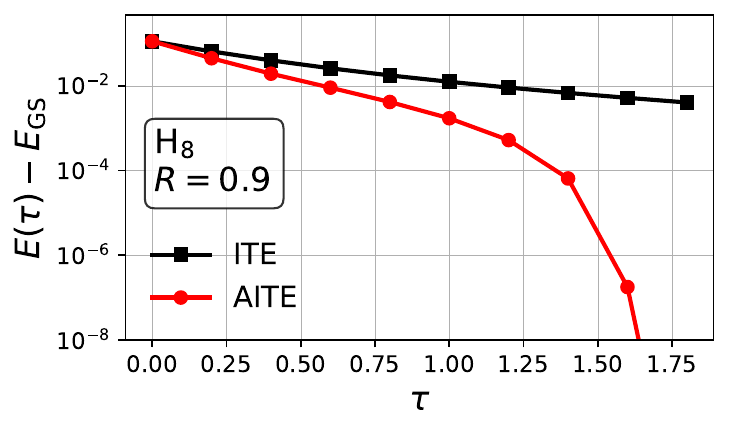}
    \end{subfigure}
    \hfill
    \begin{subfigure}{0.32\textwidth}
        \centering
        \includegraphics[width=\linewidth]{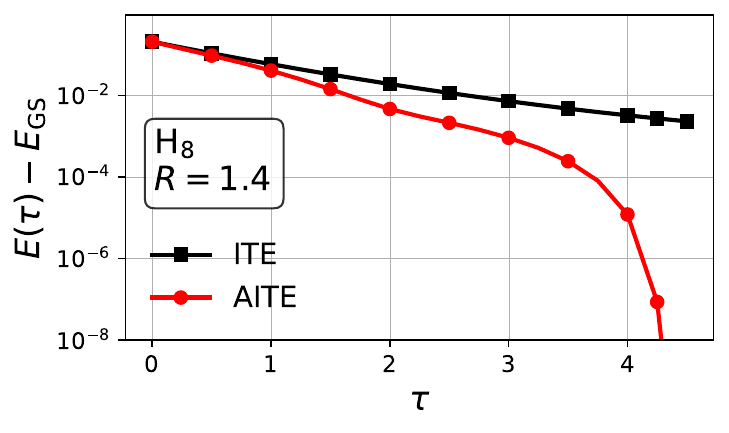}
    \end{subfigure}
    \hfill
    \begin{subfigure}{0.32\textwidth}
        \centering
        \includegraphics[width=\linewidth]{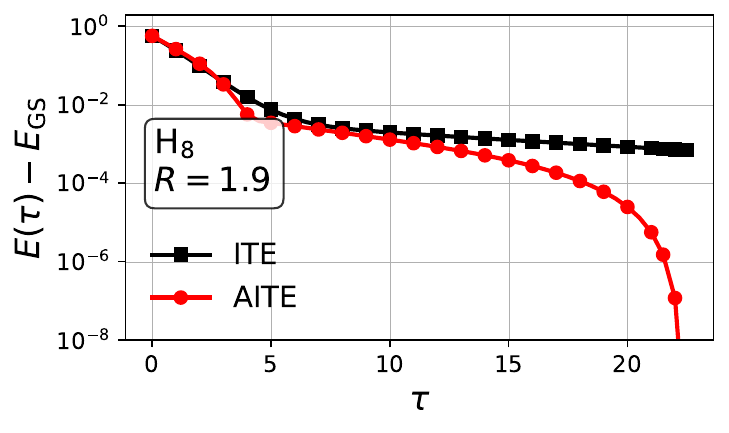}
    \end{subfigure}

    \medskip

    % Second row
    \begin{subfigure}{0.32\textwidth}
        \centering
        \includegraphics[width=\linewidth]{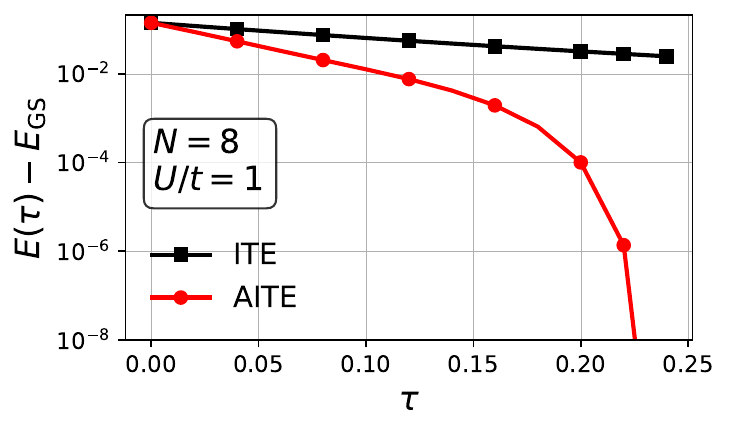}
    \end{subfigure}
    \hfill
    \begin{subfigure}{0.32\textwidth}
        \centering
        \includegraphics[width=\linewidth]{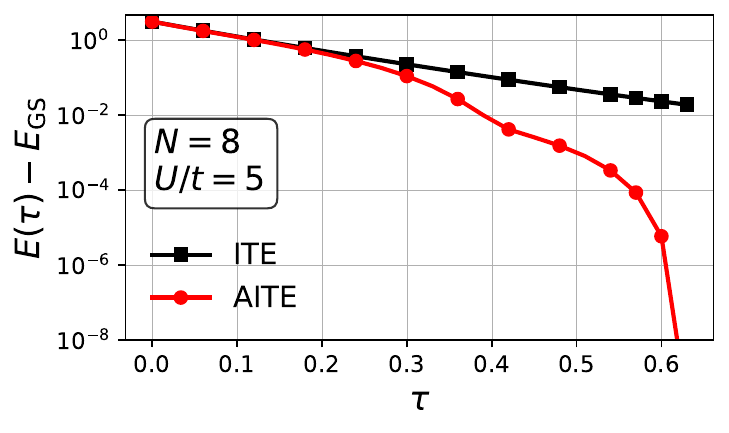}
    \end{subfigure}
    \hfill
    \begin{subfigure}{0.32\textwidth}
        \centering
        \includegraphics[width=\linewidth]{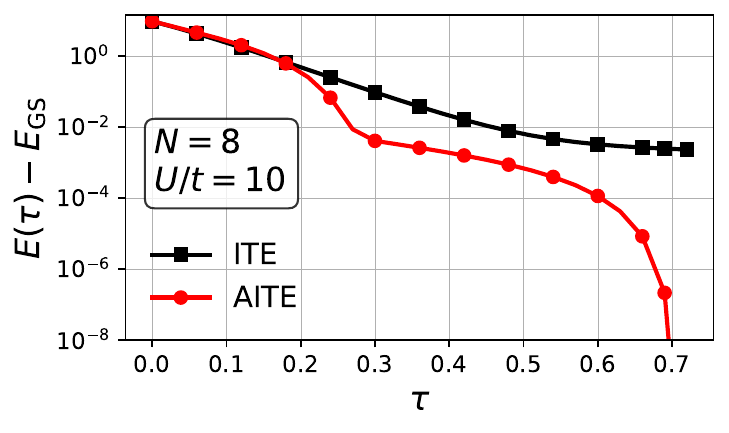}
    \end{subfigure}
\caption{Energy error $E(\tau)-E_{\rm GS}$ as a function of imaginary time $\tau$ for equidistant $\mathrm{H}_8$ (upper panels) and the one-dimensional Fermi-Hubbard model at half filling with $N=8$ sites and open boundary conditions (lower panels). Within each row, the interatomic spacing $R$ ($\mathrm{H}_8$) and interaction strength $U/t$ (Hubbard) increase from left to right. Results are shown for ITE and AITE. In all cases, the initial trial state corresponds to the Hartree–Fock state.}
\label{fig1}
\end{figure*}

In summary, the fundamental asymmetry between the two algorithms is this: in ITE, the convergence time grows without bound as the target precision is tightened, whereas in AITE it is determined solely by the initial energy error and the spectral gap, independent of the target precision altogether.
As illustrated in Fig.~\ref{figstwolevel}, AITE exhibits three distinct dynamical regimes: a \textit{linear regime}, in which the descent closely tracks standard ITE; a \textit{superlinear regime}, in which the non-Gaussian skewness begins to dominate and drive the AITE dynamics; and a \textit{power-law extinction regime}, in which the energy error vanishes exactly at the finite time $\tau^*$, while ITE continues its asymptotic exponential decay. 

\textit{Numerical experiments.---} We benchmark AITE across both weakly and strongly correlated regimes of two model systems in condensed matter and quantum chemistry. Fig.~\ref{fig1} compares AITE to ITE for the equidistant hydrogen chain  $\mathrm{H}_8$ at three interatomic distances ($R = 0.9, 1.4, 1.9$~\AA) and the eight-site one-dimensional Fermi-Hubbard model with open boundary conditions at half filling at three interaction strengths ($U/t = 1, 5, 10$). In both cases, we use the Hartree--Fock state as the initial state. We find that across all geometries and interaction strengths, the energy error exhibits the three dynamical regimes predicted analytically above: the linear, superlinear, and power-law extinction regimes. The consistent acceleration across different correlation regimes corroborates the central role of non-Gaussian features of the instantaneous energy distribution in driving convergence beyond the exponential barrier of standard ITE.

\textit{Heuristics for higher momenta.---}
The skewness appearing in Eq.~\eqref{matrix} probes non-Gaussian structure in the energy distribution that is costly to access in practice, both on near-term quantum hardware~\cite{PRXQuantum.5.040339, PRXQuantum.2.020321} and in classical implementations, as it requires estimating three-fold correlations of the Hamiltonian. Below, we address this by proposing a practical strategy to reduce computational overhead without sacrificing the superlinear convergence of AITE.

We introduce a mean-field approximation $\langle \hat{H}^3\rangle \approx \langle \hat{H}^2\rangle\langle \hat{H}\rangle$, which decouples three-point correlations into products of lower-order expectation values, yielding the mean-field skewness $\kappa_{\rm MF}(\tau) \approx -2E(\tau)/\sqrt{\mu_2(\tau)}$. Since $E(\tau) > E_{\rm GS}$ throughout the descent, $\kappa_{\mathrm{MF}}(\tau)$ is strictly negative, consistent with the energy distribution being left-skewed as the state approaches the ground state. Crucially, this expresses the third moment entirely in terms of first- and second-order expectation values of $\hat{H}$, which are directly accessible from energy and variance measurements without additional computational overhead. In addition, $\kappa_{\rm MF}(\tau)$ becomes exact whenever the energy distribution is sharply concentrated around a single eigenvalue. This is precisely the regime that is approached as the algorithm converges, so the mean-field approximation improves in accuracy throughout the descent and is asymptotically exact at convergence. It therefore provides a computationally efficient and systematically improvable entry point for AITE. 

Fig.~\ref{fig4} compares the performance of our heuristic approximation to the skewness against standard ITE and AITE, and also includes a Krylov-subspace energy estimate in which, at each imaginary-time step, the energy is optimized within the second-order Krylov subspace $\mathcal{S}_H(\tau)$ generated by the instantaneous state. We note that this Krylov energy estimate is upper-bounded by the AITE energy and comes essentially for free, as it requires only the expectation values $\langle \hat{H}^k \rangle$ with $k \leq 3$, all of which are already evaluated as part of the AITE flow.

At early times, when the energy distribution remains approximately Gaussian, and the skewness is small, $|{\kappa(\tau)}/{2}| \lesssim 1$, the four curves follow closely similar paths. Once the skewness becomes significant, however, AITE enters the superlinear convergence regime that standard ITE cannot access. Remarkably, the mean-field approximation $\kappa_{\rm MF}$ captures the skewness with sufficient accuracy that the resulting approximate AITE retains the same convergence characteristics as the exact method, making it a practical alternative that avoids the explicit measurement of third-order expectation values. We note that the strong performance of $\kappa_{\rm MF}$ is particularly pronounced at equilibrium geometries. Constructing improved heuristics for the skewness in more general settings is a natural direction for future work.

As expected, the Krylov estimate yields even lower energies than AITE throughout the evolution, sometimes improving it by up to multiple orders of magnitude. This suggests that combining Krylov-subspace energy optimization with the AITE flow provides a natural route to further accelerating convergence beyond what either approach achieves independently. 

\begin{figure}[t]
\centering
{\includegraphics[width=8.5cm]{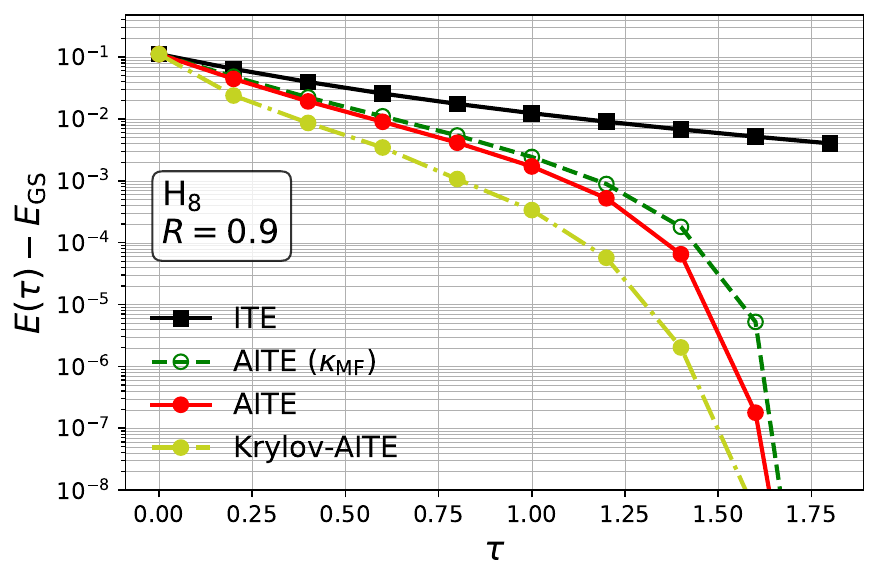}}
\caption{Energy error $E(\tau) - E_{\rm GS}$ as a function of imaginary time $\tau$ for standard ITE (black), AITE (red), AITE using the mean-field skewness approximation $\kappa_{\mathrm{MF}}$ (green), and the Krylov-subspace energy estimator for AITE (yellow), applied to the hydrogen chain H$_8$.}
\label{fig4}
\end{figure}

\textit{Discussion.---}
We have introduced an augmented ima\-gi\-nary-time evolution (AITE) framework that replaces the standard gradient flow on the energy landscape with a geometrically informed descent along locally optimal directions. The resulting flow strictly outperforms standard ITE and exhibits three qualitatively distinct regimes: a linear and a superlinear convergence regimes, followed by a finite-time extinction regime. 

Since AITE recovers standard ITE as its zero-skewness limit, the acceleration is not tied to any particular implementation. As sketched in Fig.~\ref{fig_AITE}d), the substitution $\hat{H} \rightarrow \hat{H}_{\mathrm{A}}$ offers a natural upgrade path for every branch of the ITE algorithmic family, requiring targeted modifications to the underlying algorithmic structure. In fact, beyond requiring statistical information of the instantaneous state, AITE amounts to replacing the standard propagator with an augmented one that resembles a Gaussian filter~\cite{Irmejs2024efficientquantum}, 
making it directly accessible within existing implementations. In \textit{direct ITE}, the imaginary-time propagator $\exp[-\tau\hat{H}]$ can be replaced by $\exp[-\tau\hat{H}_{\mathrm{A}}]$, yielding a geometrically informed filter that reaches the ground state at finite time. In \textit{stochastic ITE} (auxiliary-field quantum Monte Carlo~\cite{PhysRevLett.90.136401}, diffusion Monte Carlo \cite{10.1063/5.0202800}, and full configuration-interaction quantum Monte Carlo~\cite{Booth2009}), the effective Hamiltonian governing the importance-sampling weights can be replaced by its augmented counterpart, biasing the walker dynamics toward steeper descent directions.
In \textit{variational ITE} (variational quantum imaginary-time evolution~\cite{McArdle2019}, variational Monte Carlo~\cite{PhysRevB.16.3081}, neural quantum states \cite{Lange_2024}, and contracted variational eigensolvers \cite{doi:10.1021/acs.jctc.2c00446,Warren_2025}), the gradient vector $C_i(\theta)$ can be replaced by its augmented counterpart $C_i^{\mathrm{A}}(\theta)$, steering the parameter flow along provably steeper descent directions without changing the ansatz or the circuit structure. In \textit{double-bracket ITE}~\cite{gluza2025}, the bracket generator $[\hat{H}, \hat{\rho}]$ can be promoted to $[\hat{H}, \hat{\rho}^{\mathrm{A}}]$ or, alternatively, $[\hat{H}_{\mathrm{A}}, \hat{\rho}]$, accelerating the flow while preserving its unitary structure. Finally, in \textit{spectral-filter ITE} (quantum singular value transformation \cite{10.1145/3313276.3316366} and quantum eigenvalue transformation of unitaries~\cite{Dong2022}), the polynomial filter applied to $\hat{H}$ can be recentered around $\hat{H}_{\mathrm{A}}$, sharpening the spectral projection and reducing the required polynomial degree for a fixed target precision.

This apparent accessibility comes with method-de\-pen\-dent overheads, set by whether a given formulation only requires moments of the instantaneous state or must explicitly realize the augmented propagator. The moment-es\-ti\-mation cost is common to all variants of AITE: $\langle \hat H^2\rangle$ is essentially free in repeated-action methods, since $\langle \hat H^2\rangle=|\hat H|\psi\rangle|^2$, but becomes representation-limited in local-compilation approaches such as ITPP, QITE, MITE and PITE, where $\hat H^2$ generates pairwise products of elementary Hamiltonian terms; the moment $\langle \hat H^3\rangle$, which can be avoided by using the mean-field skewness heuristic, produces the corresponding triple products \cite{Saad1992,HochbruckLubich1997,GomezLurbe2026,Motta2020,Mao2023,Kosugi2021,Xie2024}. In tensor-network formulations, the same proliferation appears as an increase of the relevant MPO objects from $D$ to $D^2$ or $D^3$ before compression \cite{Vidal2004,Haegeman2011}, while LCU constructions inherit enlarged decompositions and normalization factors \cite{ChildsWiebe2012,Berry2015}. Spectral-transform methods are exceptional in that these powers remain low-degree functions of the spectrum; in QSVT, $\langle \hat H^2\rangle$ can even be extracted directly from a block-encoding ancilla population, although the analogous shortcut is absent for the odd moment $\langle \hat H^3\rangle$ \cite{Gilyen2019,Dong2022}. In stochastic and variational projector methods the limitation is mainly statistical rather than algebraic: VMC and neural quantum states can estimate $\langle \hat H^2\rangle$ from squared local energies, whereas $\langle \hat H^3\rangle$ requires access to $(\hat H^2\psi(x))/\psi(x)$; AFQMC, DMC, FCIQMC and PIGS instead incur purer estimators, longer back-propagation or forward-walking procedures, midpoint insertions, or higher-hop determinant connectivity \cite{Foulkes2001,Carleo2017,Lee2022,Zhang2018,Booth2009,Yan2018}. This broadly shared moment overhead should be separated from the stronger requirement of implementing $\exp[-\tau \hat{H}_A]$, which arises only in formulations whose primitive is an explicit projector or spectral filter. Such propagation is benign for state-vector or Krylov evolution, natural in QSVT/QETU as the scalar filter $\exp[-\tau(a_1x+a_2x^2)]$, but substantially more costly for local, LCU, and tensor-network representations, where the quadratic products become part of the generator itself \cite{Saad1992,HochbruckLubich1997,Gilyen2019,Dong2022,ChildsWiebe2012,Berry2015,Vidal2004}. It is least na\-tu\-ral for stochastic projector algorithms, whose native Hubbard--Stratonovich, drift--diffusion--branching or short-time-action structures are not generically preserved by replacing $\hat H$ with $\hat{H}_A$ \cite{Lee2022,Zhang2018,Foulkes2001,Yan2018}. In contrast, variational and double-bracket formulations need not apply $\exp[-\tau\hat{H}_A]$ as a primitive; the augmented Hamiltonian enters instead through projected gradients, local estimators, or commutator generators \cite{McArdle2019,Carleo2017,gluza2025,Suzuki2025}. Detailed implementations of the individual AITE variants and optimized cost analyses beyond the qualitative considerations presented here are left to future work.

%TC:ignore
\textit{Acknowledgments.---}
We thank Aeishah Ameera Anuar, Dimitri Pimenov, and Manuel Algaba for insightful discussions, and Pietropaolo Frisoni for carefully checking our derivations and identifying a sign error in the mixing-angle relation.

\section*{Methods}
Here we present a detailed account of the construction of the double-bracket flow in experimentally accessible subspaces, the optimization of the descent direction, and the construction of the augmented Hamiltonian $\hat{H}_{\mathrm{A}}(\tau)$. We also generalize our finite-extinction proof. 

\textit{Projected double-bracket flow.---} Let $\hat{M}$ be a Hermitian operator and consider the associated two-dimensional 
Krylov subspace 
\begin{align}
 \mathcal{S}_M(\tau) =  \mathrm{span} \left\{\ket{\psi(\tau)},
    \ket{v_M(\tau)} \equiv \frac{\bar{M}\ket{\psi(\tau)}}{\sqrt{\mu_2^M(\tau)}}\right\},
\end{align}
where $\bar{M} \equiv \hat{M} - M(\tau)$ is the centered operator with 
$M(\tau) = \bra{\psi(\tau)}\hat{M}\ket{\psi(\tau)}$, and
$\mu_n^M(\tau) \equiv \bra{\psi(\tau)}\bar{M}^n\ket{\psi(\tau)}$ is the $n$th central 
moment of $\hat{M}$ in the state $\ket{\psi(\tau)}$.
The operator of core interest is the double commutator
$\hat{\chi}(\tau) \equiv -\bigl[\hat{H},\bigl[\hat{H},\hat{\rho}(\tau)\bigr]\bigr]$, 
$\hat{\rho}(\tau) \equiv \ket{\psi(\tau)}\bra{\psi(\tau)}$. Restricting $\hat{\chi}(\tau)$ to $\mathcal{S}_M(\tau)$ yields the matrix representation 
$\chi(\tau) \equiv \hat{\chi}(\tau)\big|_{\mathcal{S}_M(\tau)}$,
\begin{align}
\label{matrixchi}
    \chi(\tau)
    =
    \begin{pmatrix}
    \bra{\psi(\tau)}\hat{\chi}\ket{\psi(\tau)} & 
    \bra{\psi(\tau)}\hat{\chi}\ket{v_M(\tau)} \\[6pt]
    \bra{v_M(\tau)}\hat{\chi}\ket{\psi(\tau)} & 
    \bra{v_M(\tau)}\hat{\chi}\ket{v_M(\tau)}
    \end{pmatrix}.
\end{align}
We now specialize to $\hat{M} = \hat{H}$ and, for brevity, write 
$\mu_n(\tau) \equiv \mu_n^H(\tau)$.
In this case, the restriction of the double commutator to $\mathcal{S}_H(\tau)$ takes the form of the matrix presented in Eq.~\eqref{matrix}.

We aim to minimize the linear functional
\begin{align}
\mathcal{F}_{\chi(\tau)}[\hat{\mathcal{O}}] \equiv \Tr\bigl[\chi(\tau)\,\hat{\mathcal{O}}\bigr]
\end{align}
over the set of (Hermitian) projectors acting on $\mathcal{S}_H(\tau)$. Since $\mathcal{S}_H(\tau)$ is two-dimensional, the minimum is attained within the set of rank-$1$ projectors $\hat{\mathcal{O}} = \ket{\phi}\bra{\phi}$, for which the 
functional reduces to the Rayleigh quotient
\begin{align}
\mathcal{F}_{\chi(\tau)}[\ket{\phi}\bra{\phi}] = \Tr\bigl[\chi(\tau)\ket{\phi}\bra{\phi}\bigr] 
= \bra{\phi}\chi(\tau)\ket{\phi}.
\end{align}
By the min-max theorem, the minimum of the Rayleigh quotient over all unit vectors 
$\ket{\phi} \in \mathcal{S}_H(\tau)$ is the smallest eigenvalue of $\chi(\tau)$, 
attained at the corresponding eigenvector. 
The global minimum over all Hermitian projectors on $\mathcal{S}_H(\tau)$ is 
therefore
\begin{align}
\min_{\hat{\mathcal{O}}}\,\mathcal{F}_{\chi(\tau)}[\hat{\mathcal{O}}]
 = -\sqrt{4\mu_2^2(\tau)\big[1 + \tfrac14\kappa^2(\tau)\big]},
\label{eq:bound}
\end{align}
and the minimum is attained uniquely at the rank-one projector
\begin{align}
\hat{\rho}^{\mathrm{A}}(\tau) = \ket{\lambda_-(\tau)}\bra{\lambda_-(\tau)},
\end{align}
where 
$\ket{\lambda_-(\tau)} = \cos(\phi(\tau))\ket{\psi(\tau)} 
+ \sin(\phi(\tau))\ket{v_H(\tau)}$. The mixing angle $\phi(\tau)$ is determined by the eigenvalue equa\-tion for $\chi(\tau)$. A direct application of the double-an\-gle formula yields $\tan\bigl(2\phi(\tau)\bigr) = \kappa(\tau) /2$.

\textit{Generalized Hamiltonians.---} We now seek a generalized Hamiltonian of the form
\begin{equation}
\label{eq:K-ansatzMethod}
  \hat{H}_{\mathrm{A}}(\tau) = a_1\bar{H} + a_2\bar{H}^2,
  \qquad a_1,a_2\in\mathbb{R},
\end{equation}
such that normalized imaginary-time evolution under $\hat{H}_{\mathrm{A}}(\tau)$ 
reproduces, to leading order in imaginary time, the target double-bracket flow:
\begin{equation}
\label{eq:goal}
  \frac{e^{-\beta \hat{H}_{\mathrm{A}}(\tau)}|\psi\rangle}
  {\|e^{-\beta \hat{H}_{\mathrm{A}}(\tau)}|\psi\rangle\|}
  \approx
  e^{-s[\hat{H},\hat{\rho}^{\mathrm{A}}(\tau)]}|\psi\rangle.
\end{equation}
A direct computation yields the exact decomposition
\begin{equation}
\label{eq:HO-decompMethod}
  [\hat{H},\hat{\rho}^{\mathrm{A}}(\tau)]|\psi\rangle
  = \Omega_{\rm eff}(\phi)\,|v_H\rangle
    + \frac{\sin(2\phi)}{2\sqrt{\mu_2}}\ket{\perp},
\end{equation}
where
$\Omega_{\rm eff}(\phi)
  = \sqrt{\mu_2}\cos(2\phi) + \frac{\mu_3}{2\mu_2}\sin(2\phi)$,
and
\begin{align*}
\ket{\perp} = \bar{H}^2|\psi\rangle - \mu_2|\psi\rangle
  - \frac{\mu_3}{\sqrt{\mu_2}}|v_H\rangle.
\end{align*}
Note that $\ket{\perp}$ is not normalized:
$\|\ket{\perp}\|^2 = \mu_4 - \mu_2^2 - \mu_3^2/\mu_2$.

The target state to first order in $s$ is therefore
\begin{equation}
\label{eq:targetMethod}
  e^{-s[\hat{H},\hat{\rho}^{\mathrm{A}}(\tau)]}|\psi\rangle
  = |\psi\rangle
    - s\,\Omega_{\rm eff}(\phi)\,|v_H\rangle
    - \frac{s\sin(2\phi)}{2\sqrt{\mu_2}}\ket{\perp} + \mathcal{O}(s^2).
\end{equation}
Expanding $e^{-\beta \hat{H}_{\mathrm{A}}}$ and normalizing:
\begin{equation}
\label{eq:ITE-expandMethod}
  \frac{e^{-\beta \hat{H}_{\mathrm{A}}}|\psi\rangle}
  {\|e^{-\beta \hat{H}_{\mathrm{A}}}|\psi\rangle\|}
  \approx |\psi\rangle
    - \beta\,\bigl(\hat{H}_{\mathrm{A}} - 
    \langle \hat{H}_{\mathrm{A}}\rangle_\psi\bigr)|\psi\rangle + \mathcal{O}(\beta^2).
\end{equation}
Matching Eq.~\eqref{eq:ITE-expandMethod} to Eq.~\eqref{eq:targetMethod} requires 
equating components along $|v_H\rangle$ and $\ket{\perp}$ separately.
Projecting onto $|v_H\rangle$ and using the ansatz~\eqref{eq:K-ansatzMethod} gives
\begin{equation}
\label{eq:match-e1-explicitMethod}
  \frac{\beta}{\sqrt{\mu_2}}\bigl(a_1\mu_2 + a_2\mu_3\bigr)
  = s\,\Omega_{\rm eff}(\phi),
\end{equation}
where we used $\bra{v_H}\bar{H}|\psi\rangle = \sqrt{\mu_2}$ and 
$\bra{v_H}\bar{H}^2|\psi\rangle = \mu_3/\sqrt{\mu_2}$.
Projecting onto $\ket{\perp}$ and using 
$\bra{\perp}\psi\rangle = 0$ gives
\begin{equation}
\label{eq:match-perp-explicitMethod}
  \beta a_2\,\|\ket{\perp}\|^2
  = \frac{s\sin(2\phi)}{2\sqrt{\mu_2}}\,\|\ket{\perp}\|^2,
\end{equation}
from which
$a_2 = (s/\beta)\sin(2\phi)/(2\sqrt{\mu_2})$. 
Finally, substituting into Eq.~\eqref{eq:match-e1-explicitMethod} yields $a_1 = \frac{s}{\beta}\,\cos(2\phi)$. 

Alternatively, one can construct a different augmented Hamiltonian of the form
\begin{align}
\hat{H}'_{\mathrm{A}}(\tau) = a'_1\bar{H} + a'_2\bar{K},
\label{eq:altH}
\end{align}
where $\hat{K}$ is an operator obtained by downfolding $\bar{H}^2$ and retaining 
terms up to a prescribed many-body rank, and $a'_1$, $a'_2$ are coefficients determined by the same matching condition as before. %As explained in the Supplemental Material, 
This provides a computationally cheaper alternative to the full $\bar{H}^2$.

\begin{figure}[t]
\begin{tikzpicture}[>=Stealth, line cap=round, line join=round]
  \def\xzero{0}
  \def\xbasin{2.1}
  \def\xfull{7.2}

  % axis
  \draw[-, thick] (0,0) -- (7.7,0) node[right, font=\footnotesize] {\Large $\eps$};

  \draw[very thick, blue!55!black] (\xbasin,0.05) -- (\xfull,0.05);
  \fill[blue!55!black] (\xfull,0.05) circle (1.6pt);
  \fill[blue!55!black] (\xbasin,0.05) circle (1.6pt);

  % extinction regime: (0, Delta/4]
  \draw[very thick, red!70!black] (\xzero,0.05) -- (\xbasin,0.05);
  \fill[red!70!black] (\xzero,0.05) circle (1.6pt);

  % tick marks and labels
  \draw (0,0.08) -- (0,-0.08) node[below, font=\footnotesize] {$0$};
  \draw (\xbasin,0.08) -- (\xbasin,-0.08) node[below, font=\footnotesize] {$\Delta/4$};
  \draw (\xfull,0.08) -- (\xfull,-0.08) node[below, font=\footnotesize] {$\Delta$};

  % region labels above the axis
  \node[font=\footnotesize, align=center, red!70!black] at ({(\xzero+\xbasin)/2},0.65)
        {extinction regime\\[-1pt]Theorem~\ref{thm:extinction}, Eq.~\eqref{eq:sqrtbound}};
  \node[font=\footnotesize, align=center, blue!55!black] at ({(\xbasin+\xfull)/2},0.65)
        {basin-entry regime\\[-1pt]$|\dot\eps|\ge2\eps(\Delta-\eps)$};

  % flow direction
  \draw[->, thick, gray] (8.0,-0.7) -- (0,-0.7);
  \node[gray, font=\scriptsize] at (4.8,-0.95)
        {AITE flow: $\eps(\tau)$ decreases monotonically};
\end{tikzpicture}
\caption{Anatomy of the proof of Theorem~\ref{thm:extinction}. A trajectory starting at any $\eps_0\in(0,\Delta)$ first crosses the blue \emph{basin-entry regime}, where a crude but universally valid bound,
$|\dot\eps|\ge2\eps(\Delta-\eps)$, already guarantees arrival at $\eps\le\Delta/4$. From there, the sharper bound of Eq.~\eqref{eq:sqrtbound} takes over in the red \emph{extinction regime}, forcing the trajectory all the way to $\eps=0$ at a finite extinction time $\tau^*$.}
    \label{fig:placeholder}
\end{figure}

\textit{Finite-time extinction beyond the two-level regime.---} The closed-form solution of Eq.~\eqref{eq:exact_sol} and the extinction time of Eq.~\eqref{eq:tau_star} were derived in the near-convergence regime, where the state is dominated by its ground-state component and the moments reduce to $\mu_2\simeq\Delta\eps$, $\mu_3\simeq\Delta^2\eps$. We now show that finite-time extinction is not an artifact of this reduction.

\begin{lema}[Moment bounds]\label{lem:moments}
Let $\hat H = \sum_k E_k \ket{E_k}\bra{E_k}$ have a nondegenerate ground state, spectral gap $\Delta = E_1-E_{\rm GS}>0$, and
let $\ket{\psi}$ be any normalized state with energy error
$\eps=\langle\hat H\rangle-E_{\rm GS}$, $0<\eps<\Delta$. Write
$\Delta_k=E_k-E_{\rm GS}$ and $p_k=|\braket{k}{\psi}|^2$. Then
\begin{equation}
  \mu_2 \,\ge\, \eps\,(\Delta-\eps),
  \qquad
  \mu_3 \,\ge\, (\Delta-\eps)\,\mu_2 \,-\, p_0\,\eps^2\Delta .
  \label{eq:momentbounds}
\end{equation}
\end{lema}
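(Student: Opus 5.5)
Both bounds follow from the fact that the energy-error variable is supported on $\{0\}\cup[\Delta,\infty)$. Concretely, I will view the energy error as a discrete random variable $X$ that takes the value $\Delta_k$ with probability $p_k$. Then $\sum_k p_k = 1$ and $\mathbb{E}[X] = \sum_k p_k\Delta_k = \eps$. Since $\bar H$ shifts every eigenvalue by $\langle \hat H\rangle = E_{\rm GS}+\eps$, the central moments are
\[
\mu_n = \sum_k p_k(\Delta_k-\eps)^n .
\]
The only structural input is that $\Delta_0 = 0$ and $\Delta_k \ge \Delta$ for all $k\ge 1$; this uses the nondegeneracy of the ground state and the definition of the gap.

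For the variance bound, I will use the pointwise inequality $\Delta_k(\Delta_k-\Delta)\ge 0$, valid for every $k$. It equals $0$ at $k=0$, and both factors are nonnegative for $k\ge1$. Averaging with weights $p_k$ gives $\mathbb{E}[X^2]\ge \Delta\,\mathbb{E}[X] = \Delta\eps$. Hence
\[
\mu_2 = \mathbb{E}[X^2]-\eps^2 \ge \eps(\Delta-\eps),
\]
which is the first inequality in Eq.~\eqref{eq:momentbounds}.

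For the third moment, I will factor $(\Delta_k-\eps)^3 = (\Delta_k-\eps)^2(\Delta_k-\eps)$ and subtract $(\Delta-\eps)(\Delta_k-\eps)^2$ term by term. This gives the exact identity
\[
\mu_3-(\Delta-\eps)\mu_2 = \sum_k p_k(\Delta_k-\eps)^2(\Delta_k-\Delta).
\]
Every term with $k\ge1$ is nonnegative, because $\Delta_k\ge\Delta$. The single $k=0$ term equals $p_0\,\eps^2(0-\Delta) = -p_0\eps^2\Delta$. Dropping the nonnegative terms yields the second inequality, $\mu_3 \ge (\Delta-\eps)\mu_2 - p_0\eps^2\Delta$.

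I do not expect a real obstacle. The only point needing care is choosing the right comparison polynomial for each bound: $X(X-\Delta)$ for $\mu_2$, and $(X-\eps)^2(X-\Delta)$ for $\mu_3$. Each choice is nonnegative on the whole spectrum except possibly at the ground state. The hypothesis $0<\eps<\Delta$ is not needed for the inequalities themselves; it only ensures that the right-hand sides are meaningful, i.e.\ positive, when they are used later in the proof of Theorem~\ref{thm:extinction}.
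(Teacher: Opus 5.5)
Your proof is correct and follows essentially the paper's argument. The variance bound uses the same comparison $\Delta_k^2\ge\Delta\,\Delta_k$ for $k\ge1$, and your exact identity $\mu_3-(\Delta-\eps)\mu_2=\sum_k p_k(\Delta_k-\eps)^2(\Delta_k-\Delta)$ is the paper's step $(\Delta_k-\eps)^3\ge(\Delta-\eps)(\Delta_k-\eps)^2$ for $k\ge1$, with the ground-state term kept inside the sum rather than split off and recombined. Your remark that $\eps<\Delta$ is not needed for the inequalities is also right: the paper invokes $\Delta_k-\eps>0$, but your factorization shows that only $\Delta_k\ge\Delta$ matters.
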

\begin{proof}
Since $\Delta_k\ge\Delta$, $\sum_k p_k\Delta_k^2\ge\Delta\sum_{k\ge1}p_k\Delta_k=\Delta\eps$, and $\mu_2=\sum_k p_k\Delta_k^2-\eps^2\ge\eps(\Delta-\eps)$. For the third moment, split off the ground-state term: $\mu_3=-p_0\eps^3+\sum_{k\ge1}p_k(\Delta_k-\eps)^3$. For $k\ge1$ and $\eps<\Delta$ one has $\Delta_k-\eps\ge\Delta-\eps>0$, hence $(\Delta_k-\eps)^3\ge(\Delta-\eps)(\Delta_k-\eps)^2$, and $\sum_{k\ge1}p_k(\Delta_k-\eps)^2=\mu_2-p_0\eps^2$. Therefore, $\mu_3\ge(\Delta-\eps)(\mu_2-p_0\eps^2)-p_0\eps^3
=(\Delta-\eps)\mu_2-p_0\eps^2\Delta$.
\end{proof}

\begin{thm}[Finite-time extinction]
\label{thm:extinction}
Along the exact AITE flow,
$\dot\eps=-\sqrt{4\mu_2^2+\mu_3^2/\mu_2}$, the energy error obeys, for all
$\eps\in(0,\Delta/4]$,
\begin{equation}
  \dot\eps \;\le\; -\,\frac{5}{16}\,\Delta^{3/2}\sqrt{\eps}\,.
  \label{eq:sqrtbound}
\end{equation}
Consequently, once a trajectory satisfies $\eps(\tau_1)=\eps_1\le\Delta/4$,
it reaches $\eps=0$ exactly, at a time
\begin{equation}
  \tau^* \;\le\; \tau_1+\frac{32}{5}\,
  \frac{1}{\Delta}\sqrt{\frac{\eps_1}{\Delta}}
  \;\le\; \tau_1+\frac{16}{5\Delta}\,.
  \label{eq:tstarbound}
\end{equation}
Moreover, any trajectory with $\eps(0)=\eps_0<\Delta$ enters this basin in
the finite time
$\tau_1\le\frac{1}{2\Delta}\ln\!\big(\tfrac{3\eps_0}{\Delta-\eps_0}\big)$.
\end{thm}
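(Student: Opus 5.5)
The plan is to work directly with the exact AITE velocity $\dot\eps=-\sqrt{4\mu_2^2+\mu_3^2/\mu_2}$ and to feed it the two inequalities of Lemma~\ref{lem:moments}. Two elementary lower bounds on the square root will be used: $|\dot\eps|\ge 2\mu_2$, which drives the basin-entry phase, and $|\dot\eps|\ge |\mu_3|/\sqrt{\mu_2}$, which drives the extinction phase. Throughout, $0<\eps<\Delta$ guarantees $\mu_2\ge\eps(\Delta-\eps)>0$, so the velocity is well defined. Since $\dot\eps<0$, $\eps$ decreases monotonically, so once the hypothesis $\eps<\Delta$ (or $\eps\le\Delta/4$) holds it persists along the trajectory.

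\emph{Step 1: the bound \eqref{eq:sqrtbound}.} First I show that $\mu_3$ is positive and quantitatively large when $\eps\le\Delta/4$. Lemma~\ref{lem:moments} with $p_0\le 1$ gives
\[
\frac{\mu_3}{\sqrt{\mu_2}}\;\ge\;(\Delta-\eps)\sqrt{\mu_2}-\frac{\eps^2\Delta}{\sqrt{\mu_2}} .
\]
The right-hand side is increasing in $\mu_2$. It may therefore be evaluated at the smallest admissible value $\mu_2=\eps(\Delta-\eps)$, which yields
\[
\frac{\mu_3}{\sqrt{\mu_2}}\;\ge\;\sqrt{\eps}\,\frac{(\Delta-\eps)^2-\eps\Delta}{\sqrt{\Delta-\eps}} .
\]
On $(0,\Delta/4]$ the numerator is decreasing in $\eps$, so it is at least its value at $\eps=\Delta/4$, namely $(9/16-1/4)\Delta^2=\tfrac{5}{16}\Delta^2>0$. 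The denominator satisfies $\sqrt{\Delta-\eps}\le\sqrt{\Delta}$. Combining these gives $\mu_3/\sqrt{\mu_2}\ge\tfrac{5}{16}\Delta^{3/2}\sqrt{\eps}$. Dropping the nonnegative $4\mu_2^2$ term under the root then gives \eqref{eq:sqrtbound}. The step I expect to be most delicate is exactly this one: the lower bound on $\mu_3$ in the Lemma depends on $\mu_2$, and one must check the monotonicity in $\mu_2$ before substituting the lower bound $\mu_2\ge\eps(\Delta-\eps)$. The constant $5/16$ is precisely what the endpoint $\eps=\Delta/4$ produces.

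\emph{Step 2: extinction time.} Inequality \eqref{eq:sqrtbound} is equivalent to $\frac{d}{d\tau}\sqrt{\eps}\le-\tfrac{5}{32}\Delta^{3/2}$ for as long as $\eps>0$. Integrating from $\tau_1$ shows that $\sqrt{\eps}$ would become negative after a time $\tfrac{32}{5}\sqrt{\eps_1}/\Delta^{3/2}$. Since $\eps\ge 0$ always (variational principle), the trajectory must reach $\eps=0$ no later than that time, which is the first inequality in \eqref{eq:tstarbound}. Inserting $\eps_1\le\Delta/4$ gives the uniform bound $16/(5\Delta)$. At $\eps=0$ the state is the ground state, $\mu_2=0$, and the flow is stopped there. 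I would note explicitly that ``reaches $\eps=0$'' means the first hitting time, since the velocity formula degenerates at that point.

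\emph{Step 3: basin entry.} For $\eps\in(0,\Delta)$, use $|\dot\eps|\ge2\mu_2\ge2\eps(\Delta-\eps)$. This is a logistic-type inequality. Writing $g=\ln\bigl(\eps/(\Delta-\eps)\bigr)$, one has $\dot g=\dot\eps\,\Delta/\bigl(\eps(\Delta-\eps)\bigr)\le-2\Delta$, so $g$ decreases at least linearly. The basin $\eps\le\Delta/4$ corresponds to $g\le\ln(1/3)$. Hence the entry time satisfies $\tau_1\le\frac{1}{2\Delta}\ln\bigl(\tfrac{3\eps_0}{\Delta-\eps_0}\bigr)$, which is trivially true (with $\tau_1=0$) if $\eps_0\le\Delta/4$ already. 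Concatenating Steps 3 and 2 gives finite-time extinction from any $\eps_0<\Delta$.
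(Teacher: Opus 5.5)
Your proposal is correct and follows the paper's proof essentially step for step. It uses the same bound $|\dot\eps|\ge\mu_3/\sqrt{\mu_2}$ combined with Lemma~\ref{lem:moments}, the same monotonicity-in-$\mu_2$ substitution giving $5/16$ at $\eps=\Delta/4$, the same $u=\sqrt{\eps}$ integration, and the same logistic bound $|\dot\eps|\ge 2\eps(\Delta-\eps)$ for basin entry; your log-odds variable $g$ is just the paper's separable integral rewritten.

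One small slip: for $\eps_0<\Delta/4$ the right-hand side $\frac{1}{2\Delta}\ln\frac{3\eps_0}{\Delta-\eps_0}$ is negative, so the entry bound is not ``trivially true with $\tau_1=0$''. It should be read as $\tau_1\le\max\bigl\{0,\frac{1}{2\Delta}\ln\frac{3\eps_0}{\Delta-\eps_0}\bigr\}$, an imprecision that the paper's statement shares.
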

\begin{proof}
Fix $\eps\le\Delta/4$ and abbreviate $x=\eps/\Delta\in(0,\tfrac14]$. The
rate obeys
$|\dot\eps|=\sqrt{4\mu_2^2+\mu_3^2/\mu_2}\ge|\mu_3|/\sqrt{\mu_2}
\ge\mu_3/\sqrt{\mu_2}$ unconditionally. By Lemma~\ref{lem:moments} and $p_0\le1$,
\begin{align}
  \frac{\mu_3}{\sqrt{\mu_2}}
  \;\ge\; (\Delta-\eps)\sqrt{\mu_2}\,-\,\frac{\eps^2\Delta}{\sqrt{\mu_2}} .
\end{align}
%The last inequality is non-trivial  only when $\mu_3>0$; to confirm this, apply Lemma~\ref{lem:moments} with  $p_0\le1$ and $\mu_2\ge\eps(\Delta-\eps)$: for $\eps\le\Delta/4$ one has
%\begin{align*}
%\mu_3 &\;\ge\; (\Delta-\eps)\mu_2 - \eps^2\Delta 
%      \\&\;\ge\; \eps(\Delta-\eps)^2 - \eps^2\Delta 
%      \\& \;=\;   \eps\Delta(1 - 3x + x^2)\big|_{x=\eps/\Delta}
%      \;\ge\; \tfrac{5}{16}\,\eps\Delta \;>\; 0,
%\end{align*}
%where the last step uses $\min_{x\in(0,1/4]}(1-3x+x^2)=\tfrac{5}{16}$, attained at $x=\tfrac14$.
The right-hand side is increasing in $\mu_2$ (i.e., its derivative, $(\Delta-\eps)/2\sqrt{\mu_2}+\eps^2\Delta/2\mu_2^{3/2}$, is positive) so it is minimized at the smallest variance, $\mu_2=\eps(\Delta-\eps)$ from Lemma~\ref{lem:moments}, giving
\begin{align}
  \frac{\mu_3}{\sqrt{\mu_2}}
  &\ge\ (\Delta-\eps)^{3/2}\sqrt{\eps}
  -\frac{\Delta\,\eps^{3/2}}{\sqrt{\Delta-\eps}}
  \nonumber \\ &=\Delta^{3/2}\sqrt{\eps}\;\frac{1-3x+x^2}{\sqrt{1-x}} .
\end{align}
On $(0,\tfrac14]$, $1-3x+x^2$ is decreasing and $\sqrt{1-x}\le1$, hence the prefactor is bounded below by $1-3\cdot\tfrac14+\tfrac1{16}=\tfrac{5}{16}$. 
%{\color{blue} Question: $|\dot{\eps}| \geq \mu_3/\sqrt{\mu_2} \geq 5/16 \Delta^{3/2} \sqrt{\eps}$ gives both conditions $\dot{\eps} \leq - 5/16 \Delta^{3/2} \sqrt{\eps}$, which is Eq. 29, and $\dot{\eps} \geq  5/16 \Delta^{3/2} \sqrt{\eps}$ since $\Delta^{3/2} \sqrt{\eps}>0$. Why are we not considering the other?  } 
%{\color{red} Answer: Since $\dot\eps\le0$ along the flow by construction (Eq.~\eqref{eq:flow_velocity}),  we have $|\dot\eps|=-\dot\eps$, so the bound $|\dot\eps|\ge\mu_3/\sqrt{\mu_2}\ge\tfrac{5}{16}\Delta^{3/2}\sqrt{\eps}$ translates unambiguously into $\dot\eps\le-\tfrac{5}{16}\Delta^{3/2}\sqrt{\eps}$, which is Eq.~\eqref{eq:sqrtbound}.}
This proves Eq.~\eqref{eq:sqrtbound}. Setting now $u=\sqrt{\eps}$, inequality \eqref{eq:sqrtbound} reads $\dot u\le-\tfrac{5}{32}\Delta^{3/2}$, so $u$ reaches zero no later than $\tau^* = \tau_1+\tfrac{32}{5}\Delta^{-3/2}\sqrt{\eps_1}$, which is Eq.~\eqref{eq:tstarbound}. The trajectory is well defined up to that time: the flow's generator is smooth in $\ket{\psi}$ wherever $\mu_2>0$, so the solution exists and is unique until extinction, and $\eps\equiv0$ (the ground state, which is a stationary point of the flow) continues it. For basin entry, use the complementary bound $|\dot\eps|\ge2\mu_2\ge2\eps(\Delta-\eps)$, valid for all $\eps<\Delta$, and integrate the separable inequality from $\eps_0$ down to $\Delta/4$: $\tau_1\le\int_{\Delta/4}^{\eps_0}\frac{d\eps}{2\eps(\Delta-\eps)} =\frac{1}{2\Delta}\ln\!\frac{3\,\eps_0}{\Delta-\eps_0}$.
\end{proof}

Notice that the proof uses only $\Delta_k\ge\Delta$, so arbitrary energy-level structure above the gap is allowed. Ground state degeneracy is also allowed after reading $p_0$ as the total ground-space population and $\Delta$ as the gap above it. For $\eps_0\ge\Delta$, however, the entry estimate does not apply; entry into the basin then follows from the strict descent $\dot\eps\le-2\mu_2<0$ away from eigenstates together with the standard nonzero ground-overlap assumption, and is observed in our numerical experiments. Finally, we notice that the constant $\tfrac{5}{16}$ is not optimal: in the near-convergence regime the sharp rate is $\sqrt{\Delta^3\eps}\,(1+\mathcal{O}(\eps/\Delta))$, recovering the extinction time in Eq.~\eqref{eq:tau_star} with unit constant.

\bibliography{Refs}

@article{Yuan2019theoryofvariational,
  doi = {10.22331/q-2019-10-07-191},
  url = {https://doi.org/10.22331/q-2019-10-07-191},
  title = {Theory of variational quantum simulation},
  author = {Yuan, X. and Endo, S. and Zhao, Q. and Li, Y. and Benjamin, S.},
  journal = {{Quantum}},
  issn = {2521-327X},
  publisher = {{Verein zur F{\"{o}}rderung des Open Access Publizierens in den Quantenwissenschaften}},
  volume = {3},
  pages = {191},
  month = oct,
  year = {2019}
}

@book{Bouchaud_Potters_2003, 
place={Cambridge}, edition={2}, 
title={{Theory of Financial Risk and Derivative Pricing: From Statistical Physics to Risk Management}}, publisher={Cambridge University Press}, 
author={Bouchaud, J.-P. and Potters, M.}, year={2003}}

@Article{10.21468/SciPostPhys.15.6.229,
	title={{Scalable imaginary time evolution with neural network quantum states}},
	author={E. Ledinauskas and E. Anisimovas},
	journal={SciPost Phys.},
	volume={15},
	pages={229},
	year={2023},
	publisher={SciPost},
	doi={10.21468/SciPostPhys.15.6.229},
	url={https://scipost.org/10.21468/SciPostPhys.15.6.229},
}

@article{PRXQuantum.2.010342,
  title = {{Real- and Imaginary-Time Evolution with Compressed Quantum Circuits}},
  author = {Lin, S. and Dilip, R. and Green, A. and Smith, A. and Pollmann, F.},
  journal = {PRX Quantum},
  volume = {2},
  issue = {1},
  pages = {010342},
  numpages = {15},
  year = {2021},
  month = {Mar},
  publisher = {American Physical Society},
  doi = {10.1103/PRXQuantum.2.010342},
  url = {https://link.aps.org/doi/10.1103/PRXQuantum.2.010342}
}

@article{Joanes1998,
 ISSN = {00390526, 14679884},
 URL = {http://www.jstor.org/stable/2988433},
 author = {D. Joanes and C. Gill},
 journal = {J. R. Stat. Soc. Ser. D Stat.},
 number = {1},
 pages = {183},
 publisher = {[Royal Statistical Society, Wiley]},
 title = {{Comparing Measures of Sample Skewness and Kurtosis}},
 urldate = {2025-12-30},
 volume = {47},
 year = {1998}
}

@INPROCEEDINGS{10367741,
  author={Gacon, J. and Zoufal, C. and Carleo, G. and Woerner, S.},
  booktitle={2023 IEEE International Conference on Quantum Computing and Engineering (QCE)}, 
  title={Stochastic Approximation of Variational Quantum Imaginary Time Evolution}, 
  year={2023},
  volume={03},
  number={},
  pages={129-139},
  doi={10.1109/QCE57702.2023.10367741}}

@inproceedings{pub.1195650887,
   author={Lee, X. and Lau, H.},
  booktitle={2025 IEEE International Conference on Quantum Computing and Engineering (QCE)}, 
  title={{Solving Constrained Combinatorial Optimization Problems with Variational Quantum Imaginary Time Evolution}}, 
  year={2025},
  volume={01},
  number={},
  pages={1955-1964},
  doi={10.1109/QCE65121.2025.00213}}

@article{PhysRev.96.1124,
  title = {{Properties of Bethe-Salpeter Wave Functions}},
  author = {Wick, G.},
  journal = {Phys. Rev.},
  volume = {96},
  issue = {4},
  pages = {1124},
  numpages = {0},
  year = {1954},
  month = {Nov},
  publisher = {American Physical Society},
  doi = {10.1103/PhysRev.96.1124},
  url = {https://link.aps.org/doi/10.1103/PhysRev.96.1124}
}

@article{gtq3-j37b,
  title = {Classical optimization with imaginary-time block encoding on quantum computers: The MaxCut problem},
  author = {Zhong, D. and Francis, A. and Rrapaj, E.},
  journal = {Phys. Rev. A},
  volume = {112},
  issue = {4},
  pages = {042420},
  numpages = {10},
  year = {2025},
  month = {Oct},
  publisher = {American Physical Society},
  doi = {10.1103/gtq3-j37b},
  url = {https://link.aps.org/doi/10.1103/gtq3-j37b}
}

@article{PhysRevA.109.052430,
  title = {Combinatorial optimization with quantum imaginary time evolution},
  author = {Bauer, N. and Alam, R. and Siopsis, G. and Ostrowski, J.},
  journal = {Phys. Rev. A},
  volume = {109},
  issue = {5},
  pages = {052430},
  numpages = {8},
  year = {2024},
  month = {May},
  publisher = {American Physical Society},
  doi = {10.1103/PhysRevA.109.052430},
  url = {https://link.aps.org/doi/10.1103/PhysRevA.109.052430}
}

@misc{dominguez2025,
      title={Causal Portfolio Optimization: Principles and Sensitivity-Based Solutions}, 
      author={A. Rodriguez Dominguez},
      year={2025},
      eprint={2504.05743},
      archivePrefix={arXiv},
      url={https://arxiv.org/abs/2504.05743}, 
}

@article{Motta2022a,
author = {Motta, M. and Rice, J.},
title = {Emerging quantum computing algorithms for quantum chemistry},
journal = {WIREs Comput. Mol. Sci.},
volume = {12},
number = {3},
pages = {e1580},
doi = {https://doi.org/10.1002/wcms.1580},
url = {https://wires.onlinelibrary.wiley.com/doi/abs/10.1002/wcms.1580},
year = {2022}
}

@article{Cao2019,
author = {Cao, Y. and others},
title = {{Quantum Chemistry in the Age of Quantum Computing}},
journal = {Chem. Rev.},
volume = {119},
number = {19},
pages = {10856},
year = {2019},
doi = {10.1021/acs.chemrev.8b00803},
URL = {https://doi.org/10.1021/acs.chemrev.8b00803}
}

@Article{Bach2010,
author={Bach, V.
and Bru, J.},
title={{Rigorous foundations of the Brockett--Wegner flow for operators}},
journal={J. Evol. Equ.},
year={2010},
month={May},
day={01},
volume={10},
number={2},
pages={425},
issn={1424-3202},
doi={10.1007/s00028-010-0055-1},
url={https://doi.org/10.1007/s00028-010-0055-1}
}

@Article{Cirillo2020,
author={Cirillo, P.
and Taleb, N.},
title={Tail risk of contagious diseases},
journal={Nat. Phys.},
year={2020},
month={Jun},
day={01},
volume={16},
number={6},
pages={606},
issn={1745-2481},
doi={10.1038/s41567-020-0921-x},
url={https://doi.org/10.1038/s41567-020-0921-x}
}

@article{Groeneveld1984,
 ISSN = {00390526, 14679884},
 URL = {http://www.jstor.org/stable/2987742},
 author = {R. Groeneveld and G. Meeden},
 journal = {J. R. Stat. Soc. Ser. D Stat.},
 number = {4},
 pages = {391},
 publisher = {[Royal Statistical Society, Wiley]},
 title = {Measuring Skewness and Kurtosis},
 urldate = {2025-12-09},
 volume = {33},
 year = {1984}
}

@misc{shrikhande2025rapidgroundstateenergy,
      title={{Rapid ground state energy estimation with a Sparse Pauli Dynamics-enabled Variational Double Bracket Flow}}, 
      author={C. Shrikhande and A. Bachhar and A. Rodriguez Jimenez and N. Mayhall},
      year={2025},
      eprint={2511.21651},
      archivePrefix={arXiv},
      url={https://arxiv.org/abs/2511.21651}, 
}

@article{BROCKETT199179,
title = {Dynamical systems that sort lists, diagonalize matrices, and solve linear programming problems},
journal = {Linear Algebra Appl.},
volume = {146},
pages = {79},
year = {1991},
issn = {0024-3795},
doi = {https://doi.org/10.1016/0024-3795(91)90021-N},
url = {https://www.sciencedirect.com/science/article/pii/002437959190021N},
author = {R. Brockett}
}

@misc{melendez2025,
      title={{Adaptive time Compressed QITE (ACQ) and its geometrical interpretation}}, 
      author={A. Acevedo Meléndez and C. Almudéver and M. Garcia-March and R. Gómez-Lurbe and L. Ion and M. Bera and R. Sanz and S. Mehrabankar and T. Pandit and A. Pérez and A. Anglés-Castillo},
      year={2025},
      eprint={2510.15781},
      archivePrefix={arXiv},
      url={https://arxiv.org/abs/2510.15781}, 
}

@article{Warren_2025,
doi = {10.1088/2058-9565/ad9fa5},
url = {https://doi.org/10.1088/2058-9565/ad9fa5},
year = {2025},
month = {jan},
publisher = {IOP Publishing},
volume = {10},
number = {2},
pages = {02LT02},
author = {Warren, S. and Wang, Y. and Benavides-Riveros, C. L. and Mazziotti, D.},
title = {Quantum algorithm for polaritonic chemistry based on an exact ansatz},
journal = {Quantum Sci. Technol.}
}

@article{doi:10.1098/rsta.1895.0010,
author = {Pearson, K.},
title = {{Contributions to the mathematical theory of evolution.—II. Skew variation in homogeneous material}},
journal = {Philos. Trans. R. Soc. A.},
volume = {186},
number = {},
pages = {343},
year = {1895},
doi = {10.1098/rsta.1895.0010},
URL = {https://royalsocietypublishing.org/doi/abs/10.1098/rsta.1895.0010}
}

@Article{McArdle2019,
author={McArdle, S.
and Jones, T.
and Endo, S.
and Li, Y.
and Benjamin, S.
and Yuan, X.},
title={Variational ansatz-based quantum simulation of imaginary time evolution},
journal={npj Quantum Inf.},
year={2019},
month={Sep},
day={06},
volume={5},
number={1},
pages={75},
issn={2056-6387},
doi={10.1038/s41534-019-0187-2},
url={https://doi.org/10.1038/s41534-019-0187-2}
}

@misc{alghadeer2025,
      title={Double-Bracket Algorithmic Cooling}, 
      author={M. Alghadeer and others},
      year={2025},
      eprint={2510.00302},
      archivePrefix={arXiv},
      primaryClass={quant-ph},
      url={https://arxiv.org/abs/2510.00302}, 
}

@Article{D4FD00039K,
author ="Magnusson, E. and Fitzpatrick, A. and Knecht, S. and Rahm, M. and Dobrautz, W.",
title  ="Towards efficient quantum computing for quantum chemistry: reducing circuit complexity with transcorrelated and adaptive ansatz techniques",
journal  ="Faraday Discuss.",
year  ="2024",
volume  ="254",
issue  ="0",
pages  ="402",
publisher  ="The Royal Society of Chemistry",
doi  ="10.1039/D4FD00039K",
url  ="http://dx.doi.org/10.1039/D4FD00039K"}

@article{Booth2009,
    author = {Booth, G. and Thom, A. and Alavi, A.},
    title = {{Fermion Monte Carlo without fixed nodes: A game of life, death, and annihilation in Slater determinant space}},
    journal = {J. Chem. Phys.},
    volume = {131},
    number = {5},
    pages = {054106},
    year = {2009},
    month = {08},
    issn = {0021-9606},
    doi = {10.1063/1.3193710},
    url = {https://doi.org/10.1063/1.3193710}
}

@Article{Cianci2024,
author={Cianci, C.
and Santos, L.
and Batista, V.},
title={{Subspace-Search Quantum Imaginary Time Evolution for Excited State Computations}},
journal={J. Chem. Theory Comput.},
year={2024},
month={Oct},
day={22},
publisher={American Chemical Society},
volume={20},
number={20},
pages={8940-8947},
issn={1549-9618},
doi={10.1021/acs.jctc.4c00915},
url={https://doi.org/10.1021/acs.jctc.4c00915}
}

@article{PhysRevLett.93.207204,
  title = {{Matrix Product Density Operators: Simulation of Finite-Tem\-pe\-ra\-ture and Dissipative Systems}},
  author = {Verstraete, F. and Garc\'{\i}a-Ripoll, J. and Cirac, J.},
  journal = {Phys. Rev. Lett.},
  volume = {93},
  issue = {20},
  pages = {207204},
  numpages = {4},
  year = {2004},
  month = {Nov},
  publisher = {American Physical Society},
  doi = {10.1103/PhysRevLett.93.207204},
  url = {https://link.aps.org/doi/10.1103/PhysRevLett.93.207204}
}

@misc{zima2026,
      title={{Fast Tensor Network Imaginary Time Evolution by Implicit Stepping on Logarithmic Grids}}, 
      author={J. Zima and E. Stoudenmire and S. White and O. Parcollet and J. Kaye},
      year={2026},
      eprint={2606.02930},
      archivePrefix={arXiv},
      url={https://arxiv.org/abs/2606.02930}, 
}

@Article{Motta2020,
author={Motta, M. and others},
title={Determining eigenstates and thermal states on a quantum computer using quantum imaginary time evolution},
journal={Nat. Phys.},
year={2020},
month={Feb},
day={01},
volume={16},
number={2},
pages={205},
issn={1745-2481},
doi={10.1038/s41567-019-0704-4},
url={https://doi.org/10.1038/s41567-019-0704-4}
}

@article{PhysRevLett.45.566,
  title = {{Ground State of the Electron Gas by a Stochastic Method}},
  author = {Ceperley, D. and Alder, B.},
  journal = {Phys. Rev. Lett.},
  volume = {45},
  issue = {7},
  pages = {566},
  numpages = {0},
  year = {1980},
  month = {Aug},
  publisher = {American Physical Society},
  doi = {10.1103/PhysRevLett.45.566},
  url = {https://link.aps.org/doi/10.1103/PhysRevLett.45.566}
}

@article{10.1063/5.0202800,
    author = {Caffarel, M. and Del Moral, P. and de Montella, L.},
    title = {{On the mathematical foundations of diffusion Monte Carlo}},
    journal = {J. Math. Phys.},
    volume = {66},
    number = {1},
    pages = {013301},
    year = {2025},
    month = {01},
    issn = {0022-2488},
    doi = {10.1063/5.0202800},
    url = {https://doi.org/10.1063/5.0202800},
}

@inproceedings{10.1145/3313276.3316366,
author = {Gily\'{e}n, A. and Su, Y. and Low, G. and Wiebe, N.},
title = {Quantum singular value transformation and beyond: exponential improvements for quantum matrix arithmetics},
year = {2019},
isbn = {9781450367059},
publisher = {Association for Computing Machinery},
address = {New York, NY, USA},
url = {https://doi.org/10.1145/3313276.3316366},
doi = {10.1145/3313276.3316366},
booktitle = {Proceedings of the 51st Annual ACM SIGACT Symposium on Theory of Computing},
pages = {193–204},
numpages = {12},
location = {Phoenix, AZ, USA},
series = {STOC 2019}
}

@article{Irmejs2024efficientquantum,
  doi = {10.22331/q-2024-06-27-1389},
  url = {https://doi.org/10.22331/q-2024-06-27-1389},
  title = {Efficient {Q}uantum {A}lgorithm for {F}iltering {P}roduct {S}tates},
  author = {Irmejs, R. and Ba{\~{n}}uls, M. and Cirac, J.},
  journal = {{Quantum}},
  issn = {2521-327X},
  publisher = {{Verein zur F{\"{o}}rderung des Open Access Publizierens in den Quantenwissenschaften}},
  volume = {8},
  pages = {1389},
  month = jun,
  year = {2024}
}

@article{10.1063/1.4916647,
    author = {Tanimura, Y.},
    title = {{Real-time and imaginary-time quantum hierarchal Fokker-Planck equations}},
    journal = {J. Chem. Phys.},
    volume = {142},
    number = {14},
    pages = {144110},
    year = {2015},
    month = {04},
    issn = {0021-9606},
    doi = {10.1063/1.4916647},
    url = {https://doi.org/10.1063/1.4916647}
}

@article{doi:10.1021/acs.jctc.2c00446,
author = {Smart, S. and Mazziotti, D.},
title = {{Accelerated Convergence of Contracted Quantum Eigensolvers through a Quasi-Second-Order, Locally Parameterized Optimization}},
journal = {J. Chem. Theory Comput.},
volume = {18},
number = {9},
pages = {5286},
year = {2022},
doi = {10.1021/acs.jctc.2c00446},
URL = {https://doi.org/10.1021/acs.jctc.2c00446}
}

@book{holevo2011probabilistic,
  title     = {{Probabilistic and Statistical Aspects of Quantum Theory}},
  author    = {Holevo, A.},
  year      = {2011},
  publisher = {Edizioni della Normale},
  address   = {Pisa},
  edition   = {2nd},
  series    = {Publications of the Scuola Normale Superiore},
  isbn      = {978-8876423789}
}

@article{Lange_2024,
doi = {10.1088/2058-9565/ad7168},
url = {https://doi.org/10.1088/2058-9565/ad7168},
year = {2024},
month = {sep},
publisher = {IOP Publishing},
volume = {9},
number = {4},
pages = {040501},
author = {Lange, H. and Van de Walle, A. and Abedinnia, A. and Bohrdt, A.},
title = {From architectures to applications: a review of neural quantum states},
journal = {Quantum Sci. Technol.}
}

@article{PhysRevB.16.3081,
  title = {{Monte Carlo simulation of a many-fermion study}},
  author = {Ceperley, D. and Chester, G. and Kalos, M.},
  journal = {Phys. Rev. B},
  volume = {16},
  issue = {7},
  pages = {3081},
  numpages = {0},
  year = {1977},
  month = {Oct},
  publisher = {American Physical Society},
  doi = {10.1103/PhysRevB.16.3081},
  url = {https://link.aps.org/doi/10.1103/PhysRevB.16.3081}
}

@article{PhysRev.84.350,
  title = {{Bound States in Quantum Field Theory}},
  author = {Gell-Mann, M. and Low, F.},
  journal = {Phys. Rev.},
  volume = {84},
  issue = {2},
  pages = {350},
  numpages = {0},
  year = {1951},
  month = {Oct},
  publisher = {American Physical Society},
  doi = {10.1103/PhysRev.84.350},
  url = {https://link.aps.org/doi/10.1103/PhysRev.84.350}
}

@article{gluza2025,
  title = {{Double-Bracket Quantum Algorithms for Quantum Imaginary-Time Evolution}},
  author = {Gluza, M. and Son, J. and Tiang, B. and Zander, R. and Seidel, R. and Suzuki, Y. and Holmes, Z. and Ng, N. Y.},
  journal = {Phys. Rev. Lett.},
  volume = {136},
  issue = {2},
  pages = {020601},
  numpages = {12},
  year = {2026},
  month = {Jan},
  publisher = {American Physical Society},
  doi = {10.1103/rw81-k8vk},
  url = {https://link.aps.org/doi/10.1103/rw81-k8vk}
}

@article{PhysRevLett.130.050601,
  title = {{Quantum Approximate Optimization Algorithm Pseudo-Boltzmann States}},
  author = {D\'{\i}ez-Valle, P. and Porras, D. and Garc\'{\i}a-Ripoll, J.},
  journal = {Phys. Rev. Lett.},
  volume = {130},
  issue = {5},
  pages = {050601},
  numpages = {6},
  year = {2023},
  month = {Feb},
  publisher = {American Physical Society},
  doi = {10.1103/PhysRevLett.130.050601},
  url = {https://link.aps.org/doi/10.1103/PhysRevLett.130.050601}
}

@article{Stokes_2023,
doi = {10.1088/2632-2153/acc8b9},
url = {https://doi.org/10.1088/2632-2153/acc8b9},
year = {2023},
month = {may},
publisher = {IOP Publishing},
volume = {4},
number = {2},
pages = {021001},
author = {Stokes, J. and Chen, B. and Veerapaneni, S.},
title = {{Numerical and geometrical aspects of flow-based variational quantum Monte Carlo}},
journal = {Mach. Learn.: Sci. Technol.},
}

@article{Dong2022,
  title = {{Ground-State Preparation and Energy Estimation on Early Fault-Tolerant Quantum Computers via Quantum Eigenvalue Transformation of Unitary Matrices}},
  author = {Dong, Y. and Lin, L. and Tong, Y.},
  journal = {PRX Quantum},
  volume = {3},
  issue = {4},
  pages = {040305},
  numpages = {25},
  year = {2022},
  month = {Oct},
  publisher = {American Physical Society},
  doi = {10.1103/PRXQuantum.3.040305},
  url = {https://link.aps.org/doi/10.1103/PRXQuantum.3.040305}
}

@article{10.1063/1.431514,
    author = {Anderson, J.},
    title = {{A random‐walk simulation of the Schrö\-din\-ger equation: H$^+_3$}},
    journal = {J. Chem. Phys.},
    volume = {63},
    number = {4},
    pages = {1499},
    year = {1975},
    month = {08},
    issn = {0021-9606},
    doi = {10.1063/1.431514},
    url = {https://doi.org/10.1063/1.431514}
}

@article{CRAMA2003546,
title = {Simulated annealing for complex portfolio selection problems},
journal = {Eur. J. Oper. Res.},
volume = {150},
number = {3},
pages = {546},
year = {2003},
issn = {0377-2217},
doi = {https://doi.org/10.1016/S0377-2217(02)00784-1},
url = {https://www.sciencedirect.com/science/article/pii/S0377221702007841},
author = {Y. Crama and M. Schyns}
}

@book{Vasquez2006,
    author = {Vazquez, J.},
    title = {{The Porous Medium Equation: Mathematical Theory}},
    publisher = {Oxford University Press},
    year = {2006},
    month = {10},
    isbn = {9780198569039},
    doi = {10.1093/acprof:oso/9780198569039.001.0001},
    url = {https://doi.org/10.1093/acprof:oso/9780198569039.001.0001},
}

@article{g4ch-5x8m,
  title = {Constrained search in imaginary time},
  author = {Penz, M. and van Leeuwen, R.},
  journal = {Phys. Rev. A},
  volume = {112},
  issue = {3},
  pages = {032815},
  numpages = {17},
  year = {2025},
  month = {Sep},
  publisher = {American Physical Society},
  doi = {10.1103/g4ch-5x8m},
  url = {https://link.aps.org/doi/10.1103/g4ch-5x8m}
}

@article{PhysRevLett.90.136401,
  title = {{Quantum Monte Carlo Method using Phase-Free Random Walks with Slater Determinants}},
  author = {Zhang, S. and Krakauer, H.},
  journal = {Phys. Rev. Lett.},
  volume = {90},
  issue = {13},
  pages = {136401},
  numpages = {4},
  year = {2003},
  month = {Apr},
  publisher = {American Physical Society},
  doi = {10.1103/PhysRevLett.90.136401},
  url = {https://link.aps.org/doi/10.1103/PhysRevLett.90.136401}
}

@article{Stokes2020quantumnatural,
  doi = {10.22331/q-2020-05-25-269},
  url = {https://doi.org/10.22331/q-2020-05-25-269},
  title = {Quantum {N}atural {G}radient},
  author = {Stokes, J. and Izaac, J. and Killoran, N. and Carleo, G.},
  journal = {{Quantum}},
  issn = {2521-327X},
  publisher = {{Verein zur F{\"{o}}rderung des Open Access Publizierens in den Quantenwissenschaften}},
  volume = {4},
  pages = {269},
  month = may,
  year = {2020}
}

@article{KOSLOFF1986223,
title = {A direct relaxation method for calculating eigenfunctions and eigenvalues of the schrödinger equation on a grid},
journal = {Chem.  Phys. Lett.},
volume = {127},
number = {3},
pages = {223},
year = {1986},
issn = {0009-2614},
doi = {https://doi.org/10.1016/0009-2614(86)80262-7},
url = {https://www.sciencedirect.com/science/article/pii/0009261486802627},
author = {R. Kosloff and H. Tal-Ezer}
}

@article{Kirkpatrick1983,
author = {S. Kirkpatrick  and C. D. Gelatt  and M. P. Vecchi },
title = {{Optimization by Simulated Annealing}},
journal = {Science},
volume = {220},
number = {4598},
pages = {671},
year = {1983},
doi = {10.1126/science.220.4598.671},
URL = {https://www.science.org/doi/abs/10.1126/science.220.4598.671}}

@article{PRXQuantum.2.020321,
  title = {{Algorithms for Quantum Simulation at Finite Energies}},
  author = {Lu, S. and Ba\~nuls, M. and Cirac, J.},
  journal = {PRX Quantum},
  volume = {2},
  issue = {2},
  pages = {020321},
  numpages = {22},
  year = {2021},
  month = {May},
  publisher = {American Physical Society},
  doi = {10.1103/PRXQuantum.2.020321},
  url = {https://link.aps.org/doi/10.1103/PRXQuantum.2.020321}
}

@article{PRXQuantum.5.040339,
  title = {{Initial State Preparation for Quantum Chemistry on Quantum Computers}},
  author = {Fomichev, S. and others},
  journal = {PRX Quantum},
  volume = {5},
  issue = {4},
  pages = {040339},
  numpages = {37},
  year = {2024},
  month = {Dec},
  publisher = {American Physical Society},
  doi = {10.1103/PRXQuantum.5.040339},
  url = {https://link.aps.org/doi/10.1103/PRXQuantum.5.040339}
}

@misc{hartung2025,
      title={{Convergence and efficiency proof of quantum imaginary time evolution for bounded order systems}}, 
      author={T. Hartung and K. Jansen},
      year={2025},
      eprint={2506.03014},
      archivePrefix={arXiv},
      url={https://arxiv.org/abs/2506.03014}, 
}

@article{mcmahon2025equatingquantumimaginarytime,
  title = {Equating quantum imaginary time evolution, Riemannian gradient flows, and stochastic implementations},
  author = {McMahon, N. and Pervez, M. and Arenz, C.},
  journal = {Phys. Rev. Res.},
  volume = {8},
  issue = {2},
  pages = {023024},
  numpages = {6},
  year = {2026},
  month = {Apr},
  publisher = {American Physical Society},
  doi = {10.1103/ht2m-1j91},
  url = {https://link.aps.org/doi/10.1103/ht2m-1j91}
}

@article{PhysRev.106.364,
  title = {{Correlation Energy of an Electron Gas at High Density}},
  author = {Gell-Mann, M. and Brueckner, K. A.},
  journal = {Phys. Rev.},
  volume = {106},
  issue = {2},
  pages = {364},
  numpages = {0},
  year = {1957},
  month = {Apr},
  publisher = {American Physical Society},
  doi = {10.1103/PhysRev.106.364},
  url = {https://link.aps.org/doi/10.1103/PhysRev.106.364}
}

@article{Saad1992,
author  = {Saad, Y.},
title   = {{Analysis of Some Krylov Subspace Approximations to the Matrix Exponential Operator}},
journal = {SIAM J. Numer. Anal.},
volume  = {29},
number  = {1},
pages   = {209},
year    = {1992},
doi     = {10.1137/0729014}
}

@article{HochbruckLubich1997,
author  = {Hochbruck, M. and Lubich, C.},
title   = {{On Krylov Subspace Approximations to the Matrix Exponential Operator}},
journal = {SIAM J. Numer. Anal.},
volume  = {34},
number  = {5},
pages   = {1911},
year    = {1997},
doi     = {10.1137/S0036142995280572}
}

@misc{GomezLurbe2026,
author        = {G{\'o}mez-Lurbe, R. and P{\'e}rez, A.},
title         = {{Pauli Propagation for Imaginary Time Evolution}},
year          = {2026},
eprint        = {2601.14400},
archivePrefix = {arXiv},
doi           = {10.48550/arXiv.2601.14400}
}

@article{Mao2023,
author  = {Mao, Y. and Chaudhary, M. and Kondappan, M. and Shi, J. and Ilo-Okeke, E. and Ivannikov, V. and Byrnes, T.},
title   = {{Measurement-Based Deterministic Imaginary Time Evolution}},
journal = {Phys. Rev. Lett.},
volume  = {131},
pages   = {110602},
year    = {2023},
doi     = {10.1103/PhysRevLett.131.110602}
}

@article{Kosugi2021,
author        = {Kosugi, T. and Nishiya, Y. and Nishi, H. and Matsushita, Y.},
title         = {{Ima\-gi\-nary-Time Evolution Using Forward and Backward Real-Time Evolution with a Single Ancilla: First-Quantized Eigensolver Algorithm for Quantum Chemistry}},
journal       = {Phys. Rev. Res.},
volume        = {4},
pages         = {033121},
year          = {2022},
doi           = {10.1103/PhysRevResearch.4.033121}
}

@article{Xie2024,
author  = {Xie, H. and Wei, S. and Yang, F. and Wang, Z. and Chen, C. and Fan, H. and Long, G.},
title   = {{A Probabilistic Imaginary Time Evolution Algorithm Based on Nonunitary Quantum Circuit}},
journal = {Phys. Rev. A},
volume  = {109},
pages   = {052414},
year    = {2024},
doi     = {10.1103/PhysRevA.109.052414}
}

@article{Vidal2004,
author  = {Vidal, G.},
title   = {{Efficient Simulation of One-Dimensional Quan\-tum Many-Body Systems}},
journal = {Phys. Rev. Lett.},
volume  = {93},
pages   = {040502},
year    = {2004},
doi     = {10.1103/PhysRevLett.93.040502}
}

@inproceedings{Gilyen2019,
author    = {Gily{'e}n, A. and Su, Y. and Low, G. and Wiebe, N.},
title     = {{Quantum Singular Value Transformation and Beyond: Exponential Improvements for Quantum Matrix Arithmetics}},
booktitle = {Proceedings of the 51st Annual ACM SIGACT Symposium on Theory of Computing},
pages     = {193--204},
publisher = {Association for Computing Machinery},
address   = {New York, NY},
year      = {2019},
doi       = {10.1145/3313276.3316366}
}

@article{ChildsWiebe2012,
author        = {Childs, A. and Wiebe, N.},
title         = {{Hamiltonian Simulation Using Linear Combinations of Unitary Operations}},
journal       = {Quantum Inf. Comput.},
volume        = {12},
number        = {11},
pages         = {901},
year          = {2012},
doi           = {10.26421/QIC12.11-12-1},
eprint        = {1202.5822},
}

@article{Berry2015,
author  = {Berry, D. and Childs, A. and Cleve, R. and Kothari, R. and Somma, R.},
title   = {{Simulating Hamiltonian Dynamics with a Truncated Taylor Series}},
journal = {Phys. Rev. Lett.},
volume  = {114},
pages   = {090502},
year    = {2015},
doi     = {10.1103/PhysRevLett.114.090502}
}

@article{Lee2022,
author  = {Lee, J. and Pham, H. and Reichman, D.},
title   = {{Twenty Years of Auxiliary-Field Quantum Monte Carlo in Quantum Chemistry: An Overview and Assessment on Main-Group Chemistry and Bond-Breaking}},
journal = {J. Chem. Theory Comput.},
volume  = {18},
number  = {12},
pages   = {7024},
year    = {2022},
doi     = {10.1021/acs.jctc.2c00802}
}

@incollection{Zhang2018,
author    = {Zhang, S.},
title     = {{Ab Initio Electronic Structure Calculations by Auxiliary-Field Quantum Monte Carlo}},
booktitle = {Handbook of Materials Modeling: Methods: Theory and Modeling},
editor    = {Andreoni, Wanda and Yip, Sidney},
publisher = {Springer International Publishing},
address   = {Cham},
pages     = {123--149},
year      = {2020},
doi       = {10.1007/978-3-319-44677-6_47}
}

@article{Foulkes2001,
author  = {Foulkes, W. and Mitas, L. and Needs, R. J. and Rajagopal, G.},
title   = {{Quantum Monte Carlo Simulations of Solids}},
journal = {Rev. Mod. Phys.},
volume  = {73},
pages   = {33},
year    = {2001},
doi     = {10.1103/RevModPhys.73.33}
}

@article{Yan2018,
author  = {Yan, Y. and Blume, D.},
title   = {{Path Integral Monte Carlo Ground State Approach: Formalism, Implementation, and Applications}},
journal = {J. Phys. B: At. Mol. Opt. Phys.},
volume  = {50},
number  = {22},
pages   = {223001},
year    = {2017},
doi     = {10.1088/1361-6455/aa8d7f}
}

@article{Carleo2017,
author  = {Carleo, G. and Troyer, M.},
title   = {{Solving the Quantum Many-Body Problem with Artificial Neural Networks}},
journal = {Science},
volume  = {355},
number  = {6325},
pages   = {602},
year    = {2017},
doi     = {10.1126/science.aag2302}
}

@article{Haegeman2011,
author  = {Haegeman, J. and Cirac, J. and Osborne, T. and Pi{\v{z}}orn, I. and Verschelde, H. and Verstraete, F.},
title   = {{Time-Dependent Variational Principle for Quantum Lattices}},
journal = {Phys. Rev. Lett.},
volume  = {107},
pages   = {070601},
year    = {2011},
doi     = {10.1103/PhysRevLett.107.070601}
}

@article{Suzuki2025,
author        = {Suzuki, Y. and Tiang, B. and Son, J. and Ng, N. and Holmes, Z. and Gluza, M.},
title         = {{Double-Bracket Algorithm for Quantum Signal Processing Without Post-Selection}},
journal       = {Quantum},
volume        = {9},
pages         = {1954},
year          = {2025},
doi           = {10.22331/q-2025-12-23-1954},
}
%TC:endignore
\end{document}